\newtheorem{theorem}{Theorem}[section]
\newtheorem{lemma}[theorem]{Lemma}
\newcommand{\matriz}[1]{\begin{array} #1 \end{array}}
\newcommand{\MC}[1]{\mathcal{#1}}
\newcommand{\A}{\mathcal{AP}}
\newcommand{\AL}{\mathcal{LAP}}
\newcommand{\D}{\mathcal{D}}
\newcommand{\B}{{\rm B}}
\newcommand{\Beta}{{\rm Beta}}
\newcommand{\MAF}{{\rm MAF}}
\title{Approaching allelic probabilities and Genome-Wide Association Studies from beta distributions.}
\author[1]{J. Santiago Garc\'{i}a-Cremades\thanks{Correspondence to: Jos\'e Santiago Garc\'ia Cremades, Departamento de Matem\'aticas, Universidad de Murcia, 30100, Espinardo, Spain. E-mail: js.garciacremades@gmail.com}}
\author[1]{\'{A}ngel del R\'{i}o}
\author[2]{Jos\'{e} A. Garc\'{i}a}
\author[3,4]{Javier Gay\'{a}n}
\author[3,5]{Antonio Gonz\'{a}lez-P\'{e}rez}
\author[3,6]{Agust\'{i}n Ruiz}
\author[6]{Oscar Sotolongo-Grau}
\author[2]{Manuel Ruiz-Mar\'{i}n}
\affil[1]{Departamento de Matem\'aticas, Universidad de Murcia.}
\affil[2]{Departamento de M\'etodos Cuantitativos e Inform\'aticos, Universidad Polit\'ecnica de Cartagena.}
\affil[3]{Department of Estructural Genomics, Neocodex, Sevilla, Spain.}
\affil[4]{Bioinfosol, Sevilla, Spain.}
\affil[5]{Centro Andaluz de Estudios Bioinformáticos (CAEBi), Sevilla, Spain.}
\affil[6]{Memory Clinic of Fundaci\'o ACE. Institut Català de Neurociències Aplicades. Barcelona. Spain.}
\providecommand{\keywords}[1]{\textbf{\textit{Keywords:}} #1}
\date{}
\renewcommand\@biblabel[1]{}
\begin{document}

\maketitle
\newpage

\begin{abstract}
In this paper we have proposed a model for the distribution of allelic probabilities for generating populations as reliably as possible. Our objective was to develop such a model which would allow simulating allelic probabilities with different observed truncation and degree of noise. In addition, we have also introduced here a complete new approach to analyze a genome-wide association study (GWAS) dataset, starting from a new test of association with a statistical distribution and two effect sizes of each genotype. The new methodological approach was applied to a real data set together with a Monte Carlo experiment which showed the power performance of our new method. Finally, we compared the new method based on beta distribution with the conventional method (based on Chi-Squared distribution) using the agreement Kappa index  and a principal component analysis (PCA). Both the analyses show found differences existed between both the approaches while selecting the single nucleotide polymorphisms (SNPs) in association.\\
\end{abstract}

\keywords{GWAS; case-control study; allelic probability; beta distribution}

\section{Introduction}

A fundamental point in a genome-wide association study (GWAS) is to define a model for allelic probabilities. A good model must strongly depend on the observed truncation and the degree of noise that distort the observed (empirical) distribution from the expected one.
By controlling the truncation and the degree of noise, allelic probabilities can be simulated in a more reliable scenario.

The Human Genome Project (\cite{Lander2001}) and the successive improvements on physical and genetic maps of the human genome have boosted the post-genome era (\cite{Altshuler2010, Frazer2007, Sachidanandam2001}). Concomitant technological achievements in the genetic field have been successfully applied to uncover thousands of genetic variants linked to multiple phenotypes. The deep characterization of loci involved in both Mendelian and complex disorders will further help in improving diagnostic resolution and, ultimately, provide clues on the design of next generation therapeutics based on etiology, rather than in symptoms or clinical findings.

The genome-wide association studies (GWAS) appear to be unstoppable. The development of high density genome-wide panels of single nucleotide polymorphism (SNPs) and its application to bio-banked samples, accumulated during the last century, are the key elements explaining GWAS emergence. Till, date ongoing GWAS projects have published 1350 GWAS documents and more than 1800 GWAS significant loci (\url{www.genome.gov/gwastudies}; Freeze Dec/2012) (\cite{Hindorff2009}). The new loci have been detected using relatively well standardized methods based on linear additive models with or without covariants, a case-control design, by applying extensive quality control to raw genotyping data, by increasing density of markers based on inference of many non-genotyped markers using high performance computation (HPC), imputation techniques and by increasingly improved reference panels of single nucleotides polymorphisms (SNPs) (\cite{Bakker2008}).

In spite of these successes, most GWAS findings, typically of small effect sizes, leave a large fraction of disease susceptibility still unexplained; a phenomenon commonly known as ``the case of the missing heritability'' (\cite{Meesters2012}). Several potential explanations for this phenomenon were proposed (\cite{Manolio2009}). An excessive simplification of statistical methods applied to GWAS datasets might account for this problem. In this regard, allelic and additive models pervasively applied to GWAS data could be the genuine spherical cow (\cite{SheltonR2007}) on genetic research. Therefore, it is necessary to perform a continuous re-analysis and re-cycling of GWAS data by applying novel statistical methods to uncover those loci that match poorly with linear models (see \cite{Ruiz2010}).

Here, we have proposed a model for the distribution of allelic probabilities which allows to simulate allelic probabilities with different observed truncation and degree of noise. We have also introduced a complete new approach to GWAS analysis, starting from a new test for association and two effect sizes of each genotype. The new methodological approach was applied to a real data set together with a Monte Carlo experiment which showed the power performance of our new method and compared the new method with the conventional method.


\section{Materials and Methods}

\subsection{Modeling Allelic Probability Distribution}\label{SectionAllelicProbabilityCaseControl}


Allele frequency refers to the proportion of a certain allele on a genetic locus of population. These proportions often exhibit extra variation that cannot be explained by a simple binomial distribution. The proportion (or binomial parameter $p$) does not remain constant in the course of collecting data. Considering the situation, it would be useful to assume that the binomial parameter $p$ varies between observations. The data could be described assuming one of many continuous distributions for $p$, $0<p<1$. However, the most sensible distribution for $p$ is the beta distribution, because it is the natural conjugate prior distribution in the Bayesian sense.

It is well known that most SNPs present a very low minor allele frequency (MAF), close to either 0 or 1 depending on the codification.
However, these SNPs with very low MAF are systematically excluded from GWAS. After elimination, it is often assumed that the allele frequencies follow a uniform distribution. This is consistent with the beta approach, because the uniform distribution in the interval [0,1] is the beta distribution with parameters $\alpha=\beta=1$. However, the observed distribution of allelic frequencies is not quite uniform. Therefore, a finer analysis of the observed distribution is required.

The beta distribution with parameters $\alpha$ and $\beta$ is denoted by $\Beta(\alpha,\beta)$ and has the following probability density function (PDF)
    $$f(a)=\left\{\matriz{{ll} \frac{\B_{\alpha,\beta}(a)}{\int_0^1 \B_{\alpha,\beta}(r)dr}, & \text{if } 0<a<1;\\0, & \text{otherwise}.} \right.$$
where $\B_{\alpha,\beta}(a)=a^{\alpha-1}(1-a)^{\beta-1}$  and $0<\alpha, \beta\leq 1$.
The mean $\mu$ and variance $\sigma^2$ of $\Beta(\alpha,\beta)$ are given by
\begin{equation}\label{MediaVarianzaalphabeta-ok}
\mu=\frac{\alpha}{\alpha+\beta}, \quad
\sigma^2=\frac{\alpha\beta}{(\alpha+\beta)^2(\alpha+\beta+1)}.
\end{equation}
Notice that the parameters $\alpha$ and $\beta$ are determined by the mean and variance (\ref{MediaVarianzaalphabeta-ok}) by the following formulae:
    \begin{equation}\label{alphabetaMediaVarianza-ok}
    \alpha = \frac{\mu(\mu-\mu^2-\sigma^2)}{\sigma^2}, \quad \beta = \frac{(1-\mu)(\mu-\mu^2-\sigma^2)}{\sigma^2}.
   \end{equation}

In order to define variables, we assumed the allelic probability as the probability that one of the possible alleles occurs. For this, we denoted the two most frequent alleles as $A$ and $B$, respectively. This symbolic association was performed at random with equal probability. The allelic probability is defined as the probability of the occurrence of the allele denoted as $A$. This introduces a random variable, referred as \emph{allelic probability} ($\A$).

As explained above, since the values of $\A$ are proportions, the random variable $\A$ can be modeled with a beta distribution $\Beta(\alpha,\beta)$. As the chosen allele is determined at random the mean of $\A$ should be $0.5$. In terms of the beta distribution this implies that $\alpha=\beta$.

However, the beta distribution is not enough to properly model the allelic probability distribution in a real dataset. There are several considerations that we must take into account to explain this situation. For example, the commercial genome-wide SNP chips designs or quality controls (QC) applied to the genotyping studies, regularly exclude the SNPs with very small MAF, those with Hardy Weinberg disequilibrium, and those with a poor quality, etc. Furthermore, the incorporation of imputation methods may also introduce bias in genome-wide allelic distribution by imputing preferentially those SNPs with a higher MAF and those located in regions in strong linkage disequilibrium. Thus, one cannot assume that $\A$ takes all the values in the interval $\left[0, 1\right]$. Either design-based or QC-based pruning of SNPs induce a truncation of beta distribution.

Therefore, we may consider the random variable $\A_t$ of the truncated allelic probabilities in a GWAS with the truncation $t$, following a beta distribution truncated in some interval $[t,1-t]$, denoted as $\Beta(\alpha,\alpha,t,1-t)$. Its probability density function is given by:

\begin{equation}\label{betru-ok}
g(a)=\left\{\matriz{{ll} \frac{\B_{\alpha,\alpha}(a)}{\int_t^{1-t} \B_{\alpha,\alpha}(r)dr}, & \text{if } t<a<1-t;\\&\\0, & \text{otherwise}.} \right.
\end{equation}

Moreover, universal or general truncation exists in commercial chips that have been designed based on data derived from worldwide populations as HapMap or 1000 genome project datasets. In contrast, the DNA samples for a specific study are obtained from local populations. It is well established that SNP frequencies vary geographically due to genetic drift and exceptionally, due to natural selection in specific endemic regions. Therefore, a low MAF in studied local population might not indicate necessarily an equivalent low MAF in worldwide populations. As a consequence, selected SNPs in a commercial chip might not be observed in the local population, yielding a very low MAF that displays allelic probability values close to 0 or 1. To further analyze this phenomenon, we introduced the random variable a priori truncated local allelic probability, $\AL_t$, and the difference $\D_t=\AL_t-\A_t$, which measures the variation between the universal allelic probability and the local allelic probability of a given SNP.

\subsubsection{The local allelic probabilities}

We assumed that $\D_t$ and $\A_t$ are independent random variables. To guarantee local allelic probabilities to be bound into the interval $[0,1]$ (see Appendix \ref{Appendix-Variation}), we assumed that $\D_t$ is truncated in the interval $[-t,t]$. Hence we modeled $\D_t$ as a truncated normal distribution $NT(0,\delta,-t,t)$. Its probability density function is given by:
\begin{equation}
h(x)=\left\{\matriz{{ll} \frac{n_{0,\delta}(x)}{\int_{-t}^{t} n_{0,\delta}(r)dr}, & \text{if } -t<x<t;\\&\\0, & \text{otherwise}.} \right.
\end{equation}
where $n_{0,\delta}(x)=\frac{1}{\sqrt{2\pi}}e^{\frac{x^2}{-2\delta^2}}$ is the probability density function of a Normal distribution with mean zero and standard deviation $\delta$. Notice that the mean of $\D_t$ is equal to zero and its variance is denoted by $\sigma_D^2$.

As $\A_t$ and $\D_t$ were assumed to be independent. The probability density function of
\begin{equation}\label{lap-model}
\AL_t=\A_t+\D_t
\end{equation}
is the convolution of the probability density functions of $\A_t$ and $\D_t$.


As $\A_t$ is distributed as $\Beta(\alpha,\alpha,t,1-t)$, it is straightforward that its mean should be 0.5.
The variance of $\A_{t}$ was calculated using the incomplete beta function and the regularized incomplete beta function as shown in Appendix \ref{Appendix-Lema} (\ref{BetaIncompleta}).
On the other hand, as $\D_t$ follows a $NT(0,\delta,-t,t)$ distribution, its mean value is zero and its variance is given by (\ref{sigma_d_eq}).
    \begin{equation}\label{sigma_d_eq}
    \sigma_D^2=\delta^2\left( 1-\frac{2\frac{t}{\delta} \phi(\frac{t}{\delta})}{2\Phi(\frac{t}{\delta})-1} \right),
    \end{equation}
where $\phi$ and $\Phi$ are the PDF and CDF of the standard normal distribution $N(0,1)$ respectively. Moreover, we have $0\le \sigma_D^2 \le \frac{t^2}{3}$ (see Lemma~\ref{1/3}).

As $\A_t$ and $\D_t$ are independent, the variance of $\AL=\A_t+\D_t$ is $\sigma_l^2=\sigma_u^2+\sigma_D^2$ and hence
    $$\sigma_u^2 \le \sigma_l^2\le \sigma_u^2 +\frac{t^2}{3}.$$

This last expression includes the variance, taking into the account the noisy data. Indeed, if a higher degree of noise exists, the $\D_t$ variance can be taken as its maximum value, $\sigma_D^2 = \frac{t^2}{3}$. So, in this case, the expression $\sigma_l^2 \approx \sigma_u^2 +\frac{t^2}{3}$ can be used as a good approximation for $\AL$ variance.

\subsubsection{A real data example}\label{SectionRealData}

 In order to illustrate our approach we took the data from a practical case. The GWAS dataset is an imputed GWAS with 1,237,567 SNPs and 1225 individuals genotyped by the Translational Genomics Research Institute (TGEN) (\cite{Reiman2007}), previously processed as part of a genome-wide meta-analysis looking for Alzheimer's disease genetic risk factors (\cite{Antunez2011}).

 Figure~\ref{ejem1} shows the comparison between the empirical allele frequencies with the PDF of the uniform (A), beta (B) and truncated beta (C) distributions, where the empirical allele frequencies are represented by dots and the PDFs are represented by continuous lines. The parameters for the beta and truncated beta distribution were taken for the distribution to have the same mean and variance as the empirical data. Figure~\ref{ejem1} shows that the beta distribution is not good enough to model the empirical allelic frequencies. The observed divergence mainly occurs in the two tails of the distribution where the theoretical beta distribution increases, while the data dramatically decreases. The observed truncation is attributed to the removal of low MAF alleles during QC and the relative inability of imputation methods to make good inferences for low MAF SNPs.

\begin{figure}[h!]
\begin{center}
\begin{tabular}{ccc}
\includegraphics[height=3.5cm]{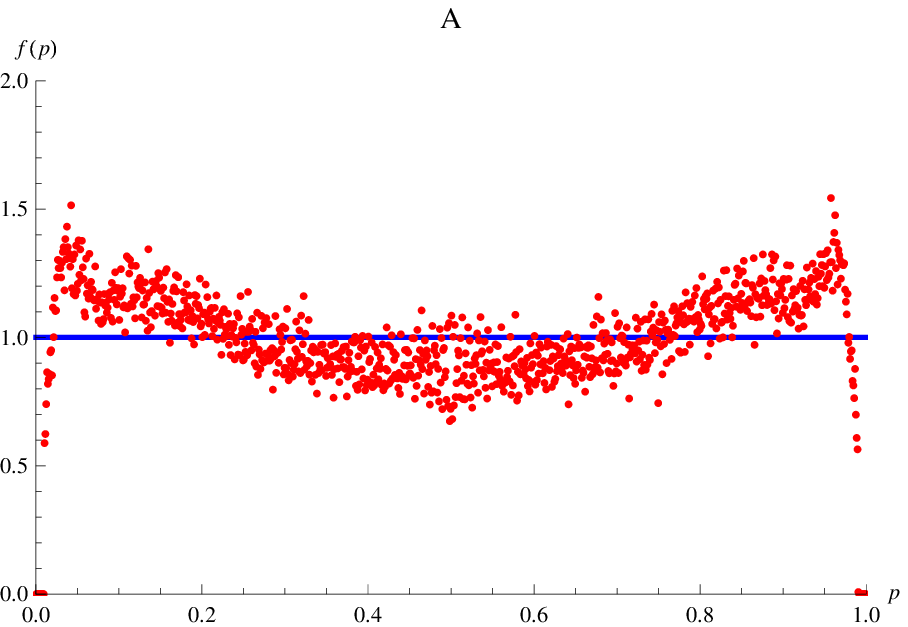} &
\includegraphics[height=3.5cm]{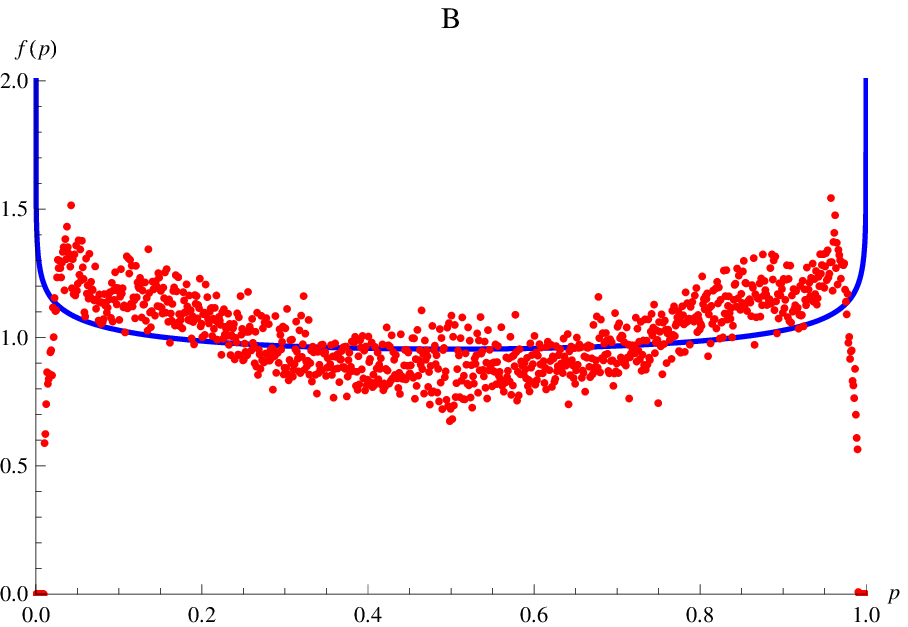} &
\includegraphics[height=3.5cm]{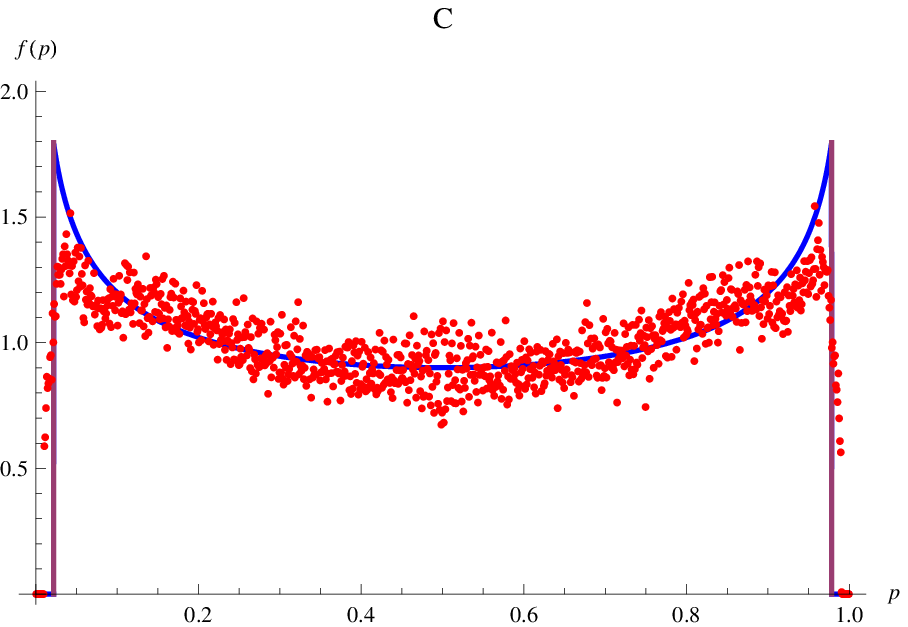}
\end{tabular}
\caption{\label{ejem1}Empirical relative frequencies (dots) and uniform (A), beta (B), and truncated beta (C) PDFs (continuous lines). 
(A) PDF of uniform distribution in $[0,1]$. (B) PDF of $\Beta\left(\alpha,\beta\right)$. (C) PDF of $\Beta\left(\alpha,\beta,t,1-t\right)$ . The parameters $\alpha$, $\beta$ of the beta distributions are taken to fit the mean and variance of empirical relative frequencies ($\alpha=\beta=0.928$) and the truncation is computed as the percentile 1 ($t=0.0218$). The empirical frequencies have been calculated grouping the data in 1000 intervals.}
\end{center}
\end{figure}

 In order to fit the data to equation (\ref{lap-model}), we took $t$ as the percentile 1 of the allelic probabilities: $t=0.0218$. Since  $\sigma_D^2$ belongs to the interval $[0,\frac{t^2}{3}]$ so that we can estimate $\sigma_D^2$ as if it is a uniform random variable in such an interval. This means that the expected value of $\sigma_D^2$ is calculated as $\sigma_D^2=\frac{t^2}{6}=7.9\cdot 10^{-5}$.
 Then, the parameter $\delta$ can be estimated as $\delta=9.59\cdot 10^{-3}$ from \eqref{sigma_d_eq}.

Since $t$ is known, $\sigma_l$ can be estimated from the sample. As $\sigma_u$ only depends on the parameters $\alpha$ and $t$, the value of $\alpha$ can be computed by solving (see Appendix \ref{Appendix-Lema}),

\begin{equation}
 \sigma_{u}^{2}=\sigma_{\alpha,t}^{2}= \frac{1}{(4\alpha + 2)\B(\alpha,\alpha)} \frac{t^{\alpha} (1-t)^{\alpha} (4t-2)}{1 - 2 I_t (\alpha,\alpha)} + \frac{ \alpha +1}{4\alpha + 2} -\frac{1}{4}=\sigma_{l}^{2}-\frac{t^{2}}{6},
\end{equation}
where $\sigma_{\alpha,t}^{2}$ is the variance of the truncated beta distribution. Therefore, $\alpha=0.7199$ is obtained from this equation. The three parameters, $t$, $\delta$, and $\alpha$ determine the $\AL_t$ distribution fitting to the dataset. Figure~\ref{ejem2} shows the empirical relative frequencies (dots) and the model of allelic probabilities $\AL_t$ (continuous line) with the parameters computed above.

\begin{figure}[b!]
\begin{center}
\begin{tabular}{c}
   \includegraphics[height=7cm]{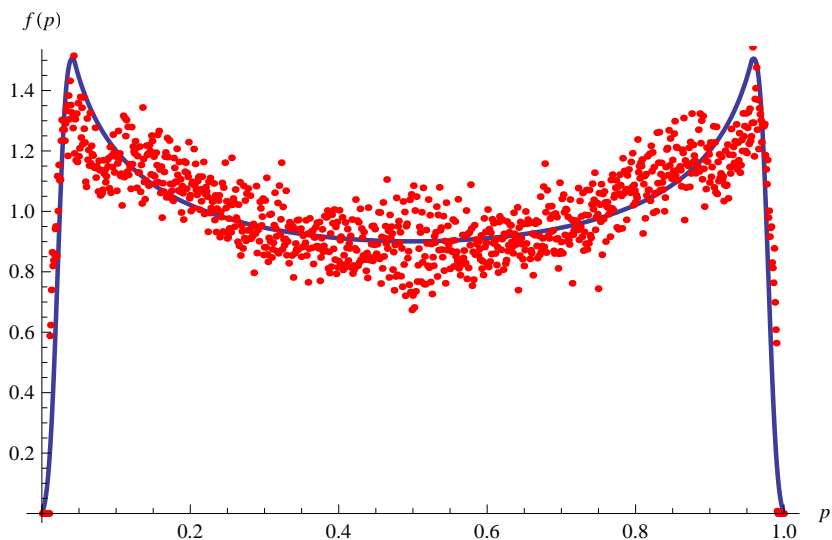}
   \end{tabular}
   \end{center}
   \caption{\label{ejem2} Empirical relative frequencies (dots) of allelic probabilities of the TGEN sample, with $1,237,567$ SNPs and PDF of truncated $\Beta(\alpha,\alpha,t)+NT(0,\delta,t)$ (continuous line) with $1225$ individuals, where $757$ are cases and  $468$ controls. The empirical relative frequencies were estimated grouping the data in $1000$ intervals between $\left[ 0,1\right]$.}
\end{figure}

The $\AL_t$ model fits much better to a real dataset than the previous analyzed models. This is especially true for the tails of the probability distribution functions. Besides, the mean squared error (MSE) estimator decreases for the new model. As shown in Table~\ref{MSEs}, the minimum MSE is reached for the $\AL_t$ model with $70.16\%$ decrease in the MSE when compared with the uniform model.

\begin{table}[htbp]
	\centering
	\caption{\textbf{\label{MSEs} The Mean Squared Error calculated in the four different distributions described fitting the empirical allelic probability grouped in $1000$ intervals.}}
		\begin{tabular}{|r|c|c|c|c|}
			\hline
		  TGEN AP  & Uniform & Beta & Beta$_t$ & $\AL_t$ \\ \hline
		  $MSE$ & $0.0429$  & $0.0517$ & $0.0315$  & $0.0128$ \\ \hline
		\end{tabular}
		
\end{table}

\subsection{Case-control probabilities}\label{SectionCaseControlProbabilities}

We assumed a GWAS, where a SNP was typed for $N$ individuals, with $N_0$ controls and $N_1$ cases. $M_s$ denotes the number of individuals which present a certain genotype ($s$).

Next we attempted to test for the following null hypothesis,
\begin{equation}\label{test}
H_0: \text{ the absence of association in genotype }s.
\end{equation}

Notice that under this null, the probability of being case conditioned to having a given genotype $s$, $b_s$, is the same as the probability of being case, $b$.
In order to test this null, the following binomial distribution, $\textbf{Bin}(M_s,b_{s})$, consisting of number of cases with genotype $s$ can be considered.
Under the null, $b_s$ can be estimated as $\widehat{b}_{s}=\widehat{b}=N_{1}/N$. On the other hand, as we mentioned before that the most sensible distribution for $b_s$ is the beta distribution, $\Beta\left(\alpha_{M_s},\beta_{M_s}\right)$. This is based on the fact that the beta distribution is the conjugate prior of the binomial distribution (\cite{MacKay2003}).

The mean of $b_s$ can be estimated by $\mu=\hat{b_s}=\frac{N_1}{N}$ and its variance $\sigma^2=\frac{\hat{b_s}(1-\hat{b_s})}{M_s}=\frac{N_0N_1}{N^2M_s}$. Since the population under study is finite it is necessary to adjust the variance $\sigma^2$ for the population size with the finite population correction factor $\frac{N-M_s}{N-1}$ (\cite{Isserlis1918}). This is specially required when sample size $M_s$ is not small in comparison with the population size $N$, so that $M_s>0.05N$. Therefore, when $M_s>5\%N$, the variance $\sigma^2$ remains as
$$\sigma^2=\frac{N_0N_1(N-M_s)}{N^2M_s(N-1)}.$$

Thus, under the null hypothesis, we can estimate $\alpha_{M_s}$ and $\beta_{M_s}$,
    \begin{equation}\label{alphaM_betaM}
     \alpha_{M_s} = \frac{N_1}{N}\left( \frac{M_s(N-1)}{N-M_s}-1\right) \quad \text{and} \quad \beta_{M_s} = \frac{N_0}{N}\left( \frac{M_s(N-1)}{N-M_s}-1\right).
    \end{equation}

Figure~\ref{PDFBsMFijoH0} compares the empirical relative frequencies of $b_s$ in TGEN dataset (dots) with the PDF of $\Beta\left(\alpha_{M_s},\beta_{M_s}\right)$ (continuous line) i.e., the beta distribution of $b_s$ under the null hypothesis, for several values of $M_s$.

\begin{figure}[h!]
    \begin{center}
    \begin{tabular}{ccc}
	\includegraphics[height=3cm]{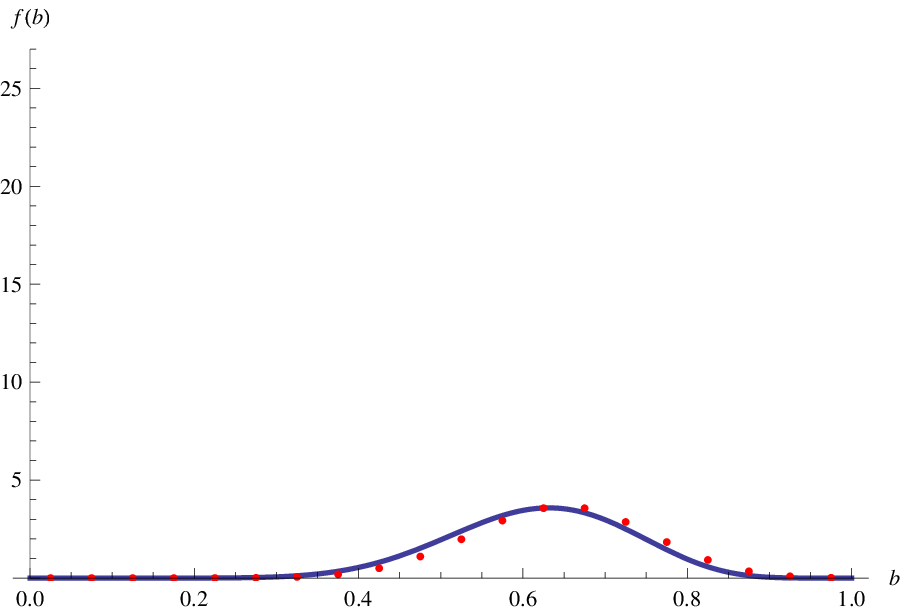} &
	\includegraphics[height=3cm]{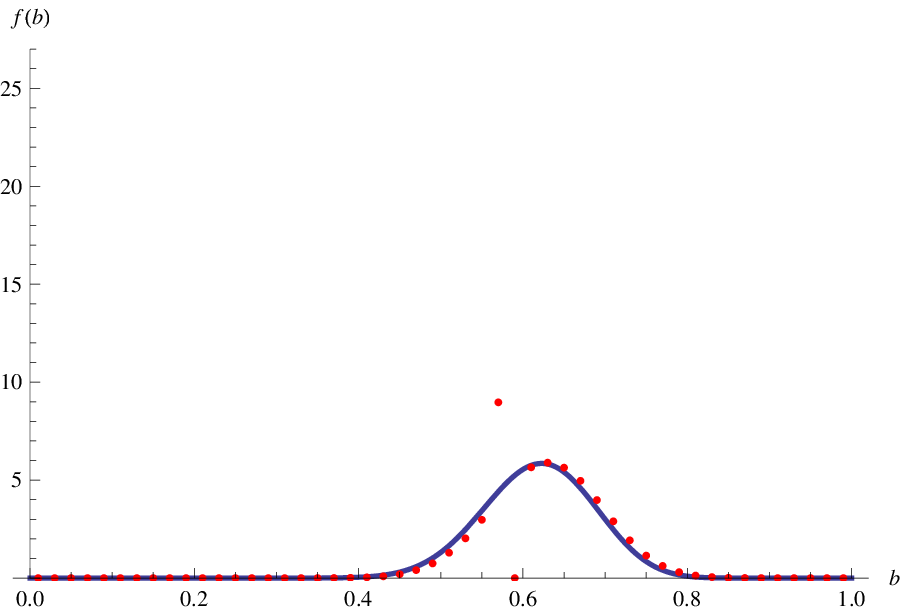} &
	\includegraphics[height=3cm]{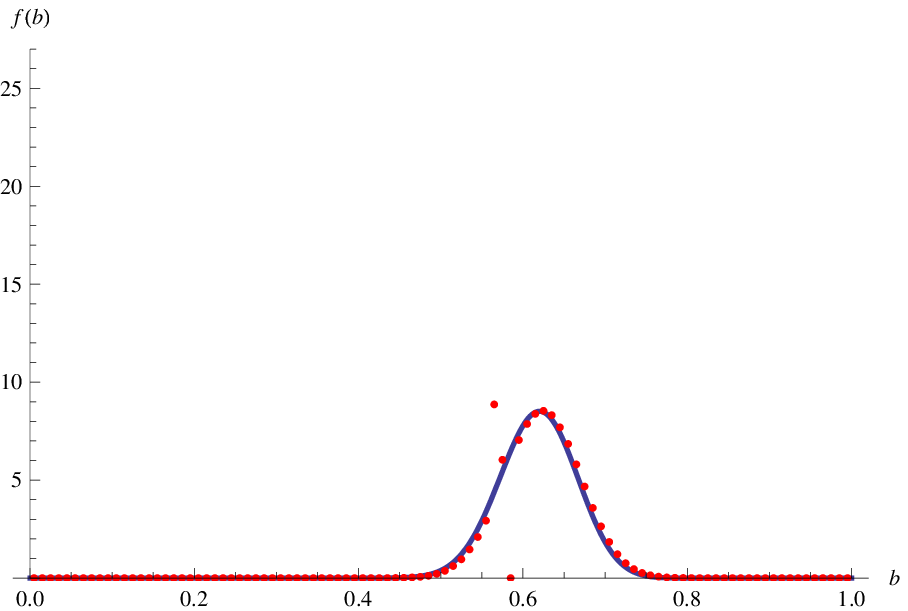} \\
	$M_s=20$ & $M_s=50$ & $M_s=100$ \\
	\includegraphics[height=3cm]{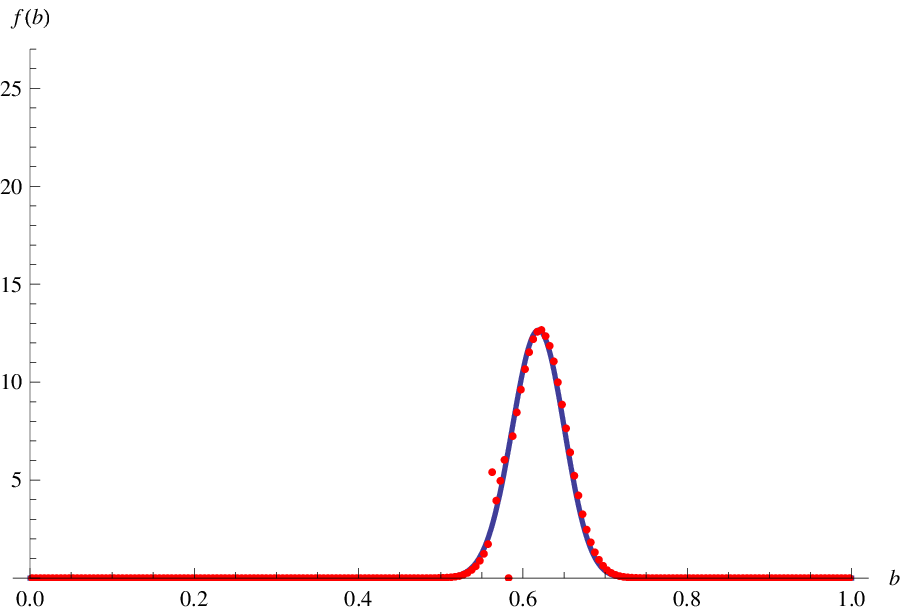} &
	\includegraphics[height=3cm]{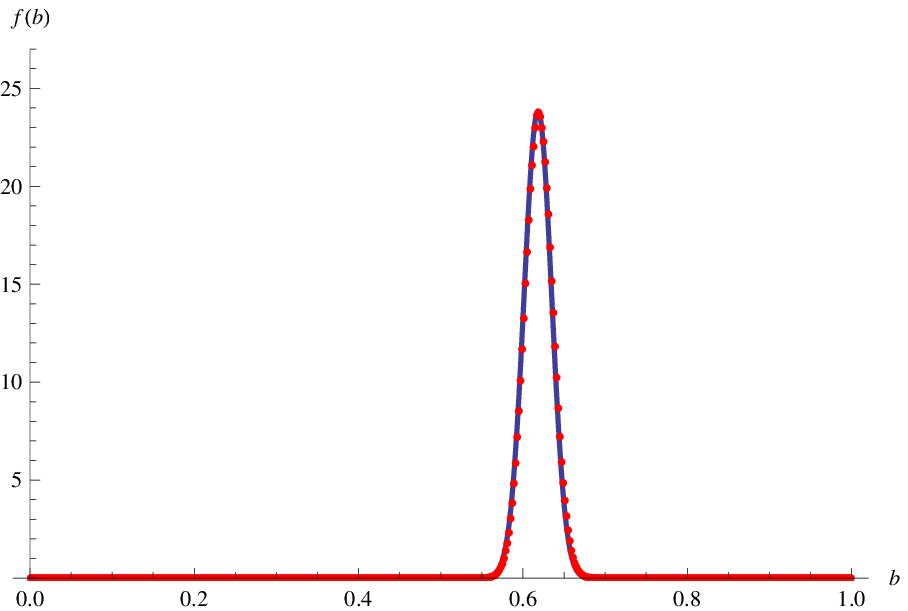} &
	\includegraphics[height=3cm]{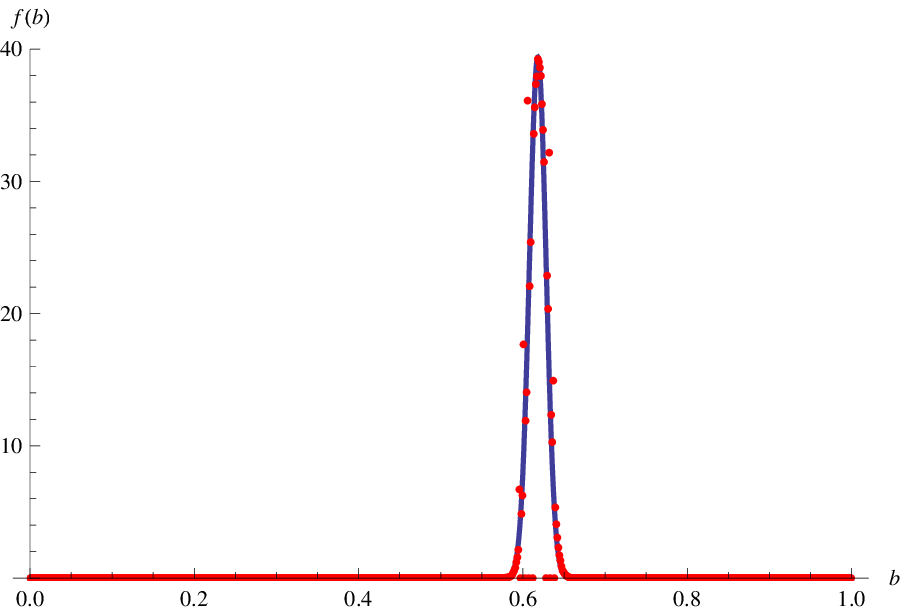} \\
	$M_s=200$ & $M_s=500$ &  $M_s=800$
    \end{tabular}
    \end{center}
\caption{\label{PDFBsMFijoH0} Empirical relative frequencies of $b_s$ (dots) for genotypes with different numbers of individuals (20, 50, 100, 200, 500 and 800 of individuals) and PDF of $\Beta(\alpha_{M_s},\beta_{M_s})$ (continuous line).}
\end{figure}

Thus, we obtained a decision rule for $H_0$ at a desired confidence interval (CI) as,

\begin{equation}\label{testbetacool}
\begin{array}{lcl}
\mbox{ Accept }H_0&\mbox{ if }& q_{\epsilon/2}\leq b_{s}\leq q_{1-\epsilon/2}\\
&&\\
\mbox{ Reject }H_0& &\mbox{ otherwise }
\end{array}
\end{equation}
 where $\epsilon$ is the type I error, $q_{\epsilon/2}$ and $q_{1-\epsilon/2}$ the extreme p-values, and both are tied by the beta distribution as,
\begin{equation}\label{testbetaCI}
	\epsilon/2 =Pr(\Beta(\alpha_{M_s}, \beta_{M_s})<q_{\epsilon/2})=Pr(\Beta(\alpha_{M_s}, \beta_{M_s})>q_{1-\epsilon/2}).
\end{equation}

Let $p$ denote the proportion of individuals of the general population which presents a given phenotype, and let $c_s$ denote the same proportion calculated in the individuals of a sample with a given genotype $s$. The ratio $\varphi_s=c_s/p$ is called the effect of the genotype $s$ on the phenotype in a given sample and represents the proportion in which an individual with genotype $s$ has more probability of presenting a phenotype than general population.
However, controls are a representation of the population, while cases are a sample of individuals with the phenotype. Therefore, it is possible to use the GWAS to find and estimation for $c_s$, (see Appendix \ref{understanding_case-control}). Also, $\varphi_s$ and $b_s$ can be related as
\begin{equation}
	\varphi_s = \frac{N_{0}}{N_{1}} \frac{b_{s}}{1-b_{s}}, \quad
b_{s} = \frac{\varphi_s N_{1}}{N_{0} + \varphi_s N_{1}}.
\end{equation}

 These last expressions allow constructing a decision rule that takes into account the effect ($\varphi_s$) of the genotype $s$.

 The new decision rule is written in the same way as \eqref{testbetacool} but in this case,
\begin{equation}\label{new-testbeta}
	\epsilon/2 =Pr(\Beta(\xi_{0}, \xi_{1})<q'_{\epsilon/2})=Pr(\Beta(\xi_{0}, \xi_{1})>q'_{1-\epsilon/2}),
\end{equation}
where,
\begin{equation}\label{new-testbetaPAR}
	\xi_{0} = \frac{\varphi_s N_{1}}{N_{0} + \varphi_s N_{1}}\left( \frac{M_s(N-1)}{N-M_s}-1\right) \text{ and } \xi_{1} = \frac{N_{0}}{N_{0} + \varphi_s N_{1}}\left( \frac{M_s(N-1)}{N-M_s}-1\right).
\end{equation}

The measured effect, applying the decision rule (\ref{testbetacool}) ($\varphi_s$), takes only into account the data contained in the sample. However, the effect ($\varphi_s$) that delimits the rejection region for a given confidence level $\epsilon$ can be computed from decision rule (\ref{new-testbeta}). Here, we called this effect as the \emph{critical effect} of genotype $s$, with a certain confidence level $\epsilon$ and it is denoted as $\Psi_s^{\epsilon}$ (Appendix~\ref{understanding_case-control}).

\subsubsection{Exploring BT in different genetic models}
The test for GWAS explained above, hereafter, would be referred as the Beta Test (BT) that considers each genotype characteristic independently.
This approach provides some new possibilities with respect to the conventional association test based on contingency tables and hereafter, would be called as  Conventional Test (CT).
In univariate CT, one can consider five genotype characteristics corresponding to the two alleles ($A$ and $B$) and their three possible combinations ($AA$, $AB$ and $BB$), separately.

Many contingency tables can be constructed in the CT by considering two or three genotype characteristics.
Similarly, many models can be used for the BT, considering the different non-empty subsets of the genotype characteristic. For example, there are eight possibilities for both univariate CT and BT.
However, most of the models are not meaningful for genetic purposes. For example, the allelic feature against the homozygous genotype or the heterozygous presence compared to the dominant model are two comparisons that do not provide useful information. On the other hand, the odds-ratio is not well defined when more than two characteristics are considered.

Therefore, in practice, only five models are used, namely \textit{allelic} ($A$ versus $B$), \textit{dominant} ($AA\cup AB$ versus $BB$), \textit{recessive} ($AA$ versus $AB\cup BB$), \textit{heterozygous} ($AA\cup BB$ versus $AB$) and \textit{genotype} ($AA$ versus $AB$ versus $BB$).

Next, we focused our attention in the allelic CT model and the alleles A and B of BT models because of the fact that the allelic CT model is the most common model used in GWAS. The relationships among the models can be seen in Appendix~\ref{Moritz}. The remaining models are available from the authors upon request.

\subsection{Measuring (or relating) locus effect in BT models}
Conventionally, odds-ratio is the effect measure estimated in a case-control GWAS. Let $OR_a$ be the odds-ratio in allelic model between cases and controls.
The magnitude of effect of the allelic CT model ($OR_a$) can also be related to the effect of the BT. In other words, the effect of a given model can be expressed as a function of $OR_a$.

In order to simulate a case-control population, it is necessary to fix an odds-ratio value, for example, $OR_a$, and another parameter like the MAF of a population of controls. Let $x_0$ be the MAF in a population of controls, and $x_1$ the MAF in a population of cases. If $OR_a$ is fixed, $x_1$ can be written as
\begin{equation}\label{ORMAF}
x_1=\frac{OR_a\cdot x_0}{1-x_0\cdot\left(1-OR_a\right)}.
\end{equation}
 Next, using the Hardy-Weinberg equilibrium (for short, HWE), it is possible to simulate the case-control population of SNPs.

The odds-ratio, as usually understood, shows the difference between the probability distribution for being case and the probability distribution for being control.
The BT assumed a different null hypothesis: in each genotype characteristic, the probability of being case is contained in a confidence interval centered around the proportion between cases and total population $N_1/(N_0+N_1)$. This proportion can be measured with the effect $\varphi_s$, and it is fixed for each genotype characteristic.
From the HWE and using (\ref{ORMAF}), the $OR_a$ and $\varphi_s$ (for $s=A$ or $B$) are related as follows:
	\begin{equation}\label{OREF}
	\varphi_A=\frac{OR_a}{\left(1-x_0\cdot(1-OR_a)\right)} \quad \text{and} \quad	\varphi_B=\frac{1}{\left(1-x_0\cdot (1-OR_a)\right)}.
	\end{equation}
	
Remarkably, if allelic frequencies in cases and controls are equal, there is no effect in any model. In other words, $x_0=x_1$ implies that $OR_a=1$ and $\varphi_s=1$.
Figure~\ref{plotsEF} shows the effects ($\varphi_s$) for different models as a function of the allelic odds-ratio ($OR_a\in \left[0, 10\right]$).

\begin{figure}[h!]
    \begin{center}
    \begin{tabular}{cc}
\includegraphics[height=5cm]{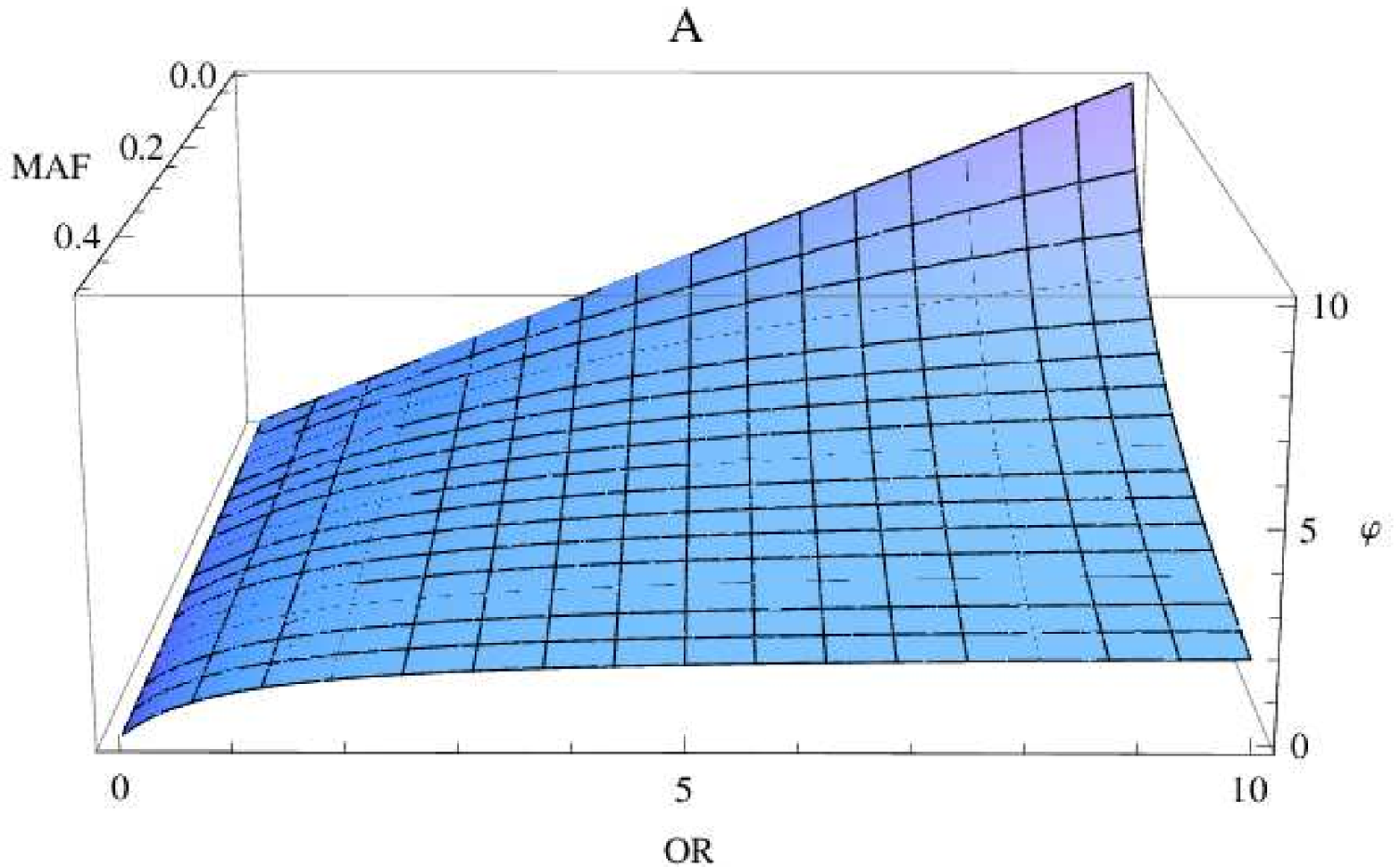} &
\includegraphics[height=5cm]{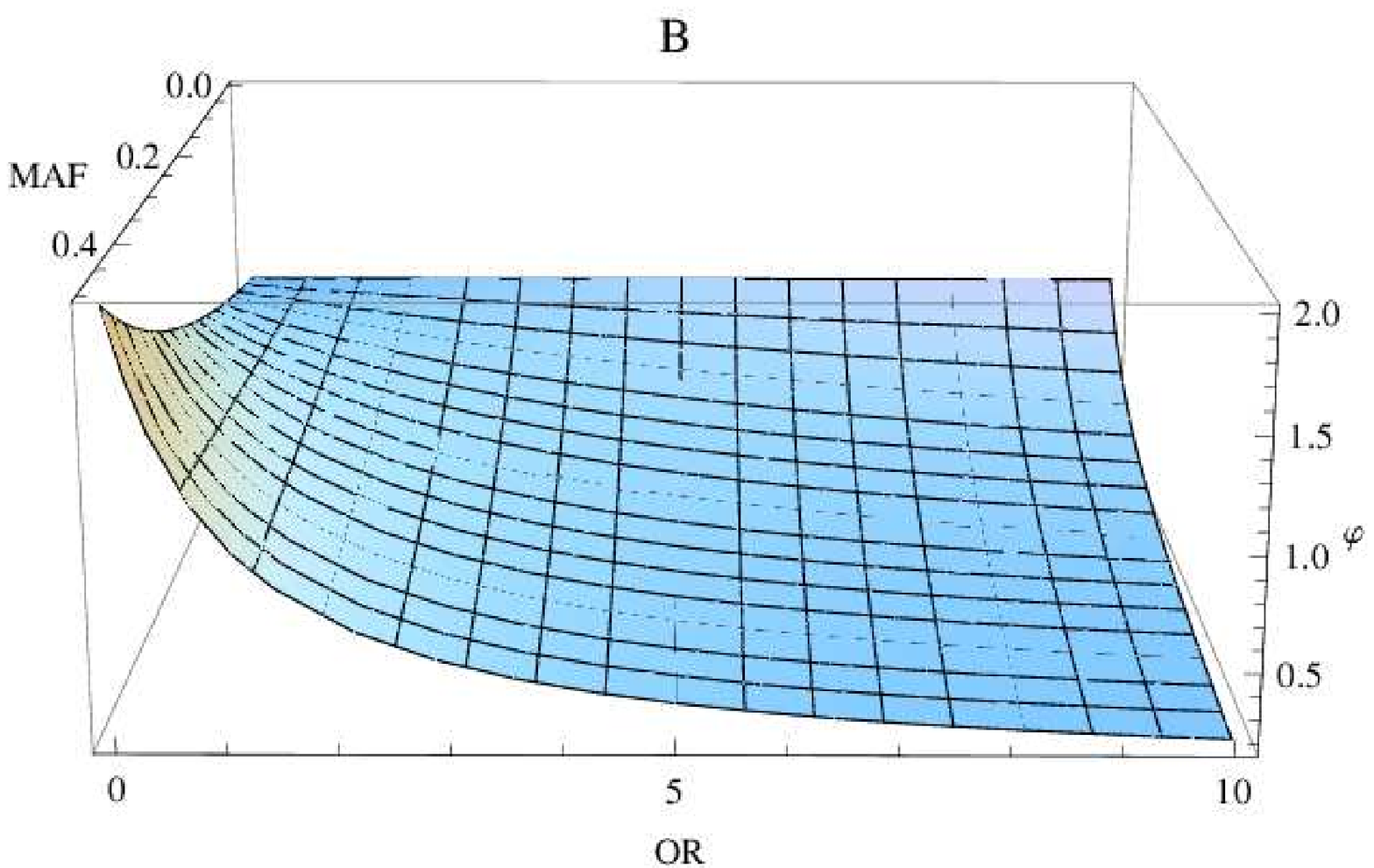} \\
    \end{tabular}
    \end{center}
\caption{\label{plotsEF} In the 3D graphic A, there are the effect of $s=A$ ($\varphi_A$) respect of the $OR_a$ between zero and $10$. In the 3D graphic B, there are the effect of $s=B$ ($\varphi_B$) respect of the $OR_a$ between zero and $10$, for different values of $MAF$.}
\end{figure}

\section{Monte Carlo Simulation}
A simulated dataset, consisting of $N_1=1000$ cases and $N_0=1000$ controls, was analyzed with both conventional chi-square (CT) and the new association (BT) tests in allelic models. A total of $10^5$ SNPs, with different minor allele frequencies ($0\leq\MAF\leq 0.5$), were simulated under the null hypothesis i.e., to have no effect on the trait ($OR_a=1$ or all the effects are equal to 1). This analysis can reflect whether the new Beta Test (BT) conforms to the theoretical beta distribution, and compares how the Conventional Test (CT) conforms to the Chi-square distribution. For estimating  the type I error for each test  (size of the test), we counted the number of test-statistics that had values above the critical values of the expected distribution.

\begin{table}[h!]
\centering
\caption{\textbf{\label{typeIsim} Type I Error (size) for each model of Beta and Conventional tests, respectively, taking the $\MAF$ as a random value between $[0,0.5]$ and different critical values of the expected distribution with corresponding confidence levels ($0.05$, $0.01$ and $0.001$).}$^1$}

\begin{tabular}{|c|c|c|c|c|c|c|c|c|}
\hline
\multirow{2}[4]{*}{\textbf{Expected}} & \multicolumn{8}{c|}{\textbf{BT}} \bigstrut\\
\cline{2-9}  & \textbf{Allele A} & \textbf{Allele B} & \textbf{DomA}  & \textbf{DomB} & \textbf{Homoz} & \textbf{s=AA} & \textbf{s=AB} & \textbf{s=BB} \bigstrut\\
\hline
\textbf{0.05} & 0.0489 & 0.0491 & 0.0493 & 0.0476 & 0.0491 & 0.0496 & 0.049 & 0.0494 \bigstrut\\
\hline
\textbf{0.01} & 0.00989 & 0.00989 & 0.00986 & 0.00926 & 0.0103 & 0.00996 & 0.0104 & 0.00982 \bigstrut\\
\hline
\textbf{0.001} & 0.000978 & 0.000957 & 0.00101 & 0.000739 & 0.000926 & 0.00107 & 0.000968 & 0.000989 \bigstrut\\
\hline
\end{tabular}%
\\
\begin{tabular}{|c|c|c|c|c|c|}
\hline
\multirow{2}[4]{*}{\textbf{Expected}} & \multicolumn{5}{c|}{\textbf{CT}} \bigstrut\\
\cline{2-6}  & \textbf{ALLELIC} & \textbf{DOM} & \textbf{REC} & \textbf{HETEROZ} & \textbf{GENO} \bigstrut\\
\hline
\textbf{0.05} & 0.0491 & 0.0494 & 0.0477 & 0.0492 & 0.0466 \bigstrut\\
\hline
\textbf{0.01} & 0.00989 & 0.00981 & 0.00926 & 0.0103 & 0.00937 \bigstrut\\
\hline
\textbf{0.001} & 0.000957 & 0.000978 & 0.000739 & 0.000916 & 0.000947 \bigstrut\\
\hline
\end{tabular}

\footnotesize{
\begin{flushleft} $^1$ The models are composed in BT by Allele $A$, Allele $B$, Dominant of $A$ ($AA\cup AB$), Dominant of $B$ ($AB\cup BB$), Homozygous ($AA\cup BB$)$s=AA$, $s=AB$ and $s=BB$. While in CT, the models are \textit{allelic} ($A$ versus $B$), \textit{dominant} ($AA\cup AB$ versus $BB$), \textit{recessive} ($AA$ versus $AB\cup BB$), \textit{heterozygous} ($AA\cup BB$ versus $AB$) and \textit{genotype} ($AA$ versus $AB$ versus $BB$). \end{flushleft}}
\end{table}

In order to simulate the case-control populations under certain conditions, only two parameters (MAF and an odds-ratio or $\varphi_s$) need to be fixed, as explained before. Indeed, fixing $\MAF$ in controls and the effect (odds-ratio or $\varphi_s$), the entire population is already defined.

The populations were simulated by generating at each SNP a random value for genotypic probabilities of controls according to $\MAF$ value and HWE. Once controls' $\MAF$ value and effect were fixed, the $\MAF$ values for cases were straightforwardly obtained, and following HWE, the genotype probabilities were computed, generating the cases population.

Using this method, both tests, CT and BT, yielded approximately the expected number of false positives (see Table~\ref{typeIsim}), suggesting they all conform to the expected theoretical distributions.

To estimate the power of both CT and BT, we carried out an analysis of a simulated dataset of 1000 cases and 1000 controls. Sets of $10^5$ SNPs were simulated under different alternative hypothesis (see Table~\ref{powersim}), with different effect sizes and minor allele frequencies ($0.05$, $0.2$ and $0.4$), with a fixed confidence level $\epsilon=0.05$. The effect of the genotype characteristics is related with the odds-ratio of corresponding model. Note that some scenarios could not be estimated due to HWE restrictions. Our results showed that both tests detected SNPs in association with similar power, depending on the new effect measurement and presenting there the allelic models comparison.

\begin{table}[h!]
\centering
\caption{\textbf{\label{powersim}  Power of both tests BT and CT in simulated data, taking SNPs under the alternative of association, with different minor allele frequencies ($\MAF$) and effect sizes at allelic models ($\varphi_A$ and $\varphi_B$).}$^1$}
\scalebox{0.7}{
\begin{tabular}{|c|c|c|c|c|c|c|c|}
\hline
\multicolumn{2}{|c|}{\textbf{Allele A}} & \multicolumn{2}{c|}{\boldmath{}\textbf{MAF = 0.05 $\varphi_A \in [0,20]$}\unboldmath{}} & \multicolumn{2}{c|}{\boldmath{}\textbf{MAF = 0.2 $\varphi_A \in [0,5]$}\unboldmath{}} & \multicolumn{2}{c|}{\boldmath{}\textbf{MAF = 0.4 $\varphi_A \in [0,2.5]$}\unboldmath{}} \bigstrut\\
\hline
\boldmath{}\textbf{Fixed $\varphi_A$}\unboldmath{} & \textbf{MODEL} & \boldmath{}\textbf{POWER (in $\%$)}\unboldmath{} & \boldmath{}\textbf{$OR_a$}\unboldmath{} & \boldmath{}\textbf{POWER (in $\%$)}\unboldmath{} & \boldmath{}\textbf{$OR_a$}\unboldmath{} & \boldmath{}\textbf{POWER (in $\%$)}\unboldmath{} & \boldmath{}\textbf{$OR_a$}\unboldmath{} \bigstrut\\
\hline
\multirow{2}[4]{*}{\textbf{0.5}} & \textbf{BT A} & 98.8 & \multirow{2}[4]{*}{0.49} & 100 & \multirow{2}[4]{*}{0.44} & 100 & \multirow{2}[4]{*}{0.38} \bigstrut\\
\cline{2-3}\cline{5-5}\cline{7-7}  & \textbf{CT ALLELIC} & 98.89 &   & 100 &   & 100 &  \bigstrut\\
\hline
\multirow{2}[4]{*}{\textbf{0.67}} & \textbf{BT A} & 74.4 & \multirow{2}[4]{*}{0.66} & 99.99 & \multirow{2}[4]{*}{0.62} & 100 & \multirow{2}[4]{*}{0.55} \bigstrut\\
\cline{2-3}\cline{5-5}\cline{7-7}  & \textbf{CT ALLELIC} & 75.65 &   & 99.99 &   & 100 &  \bigstrut\\
\hline
\multirow{2}[4]{*}{\textbf{0.8}} & \textbf{BT A} & 31.79 & \multirow{2}[4]{*}{0.79} & 91.12 & \multirow{2}[4]{*}{0.76} & 99.96 & \multirow{2}[4]{*}{0.71} \bigstrut\\
\cline{2-3}\cline{5-5}\cline{7-7}  & \textbf{CT ALLELIC} & 33.27 &   & 91.12 &   & 99.96 &  \bigstrut\\
\hline
\multirow{2}[4]{*}{\textbf{1.25}} & \textbf{BT A} & 40.26 & \multirow{2}[4]{*}{1.27} & 96.7 & \multirow{2}[4]{*}{1.33} & 100 & \multirow{2}[4]{*}{1.5} \bigstrut\\
\cline{2-3}\cline{5-5}\cline{7-7}  & \textbf{CT ALLELIC} & 40.21 &   & 96.7 &   & 100 &  \bigstrut\\
\hline
\multirow{2}[4]{*}{\textbf{1.5}} & \textbf{BT A} & 90.82 & \multirow{2}[4]{*}{1.54} & 100 & \multirow{2}[4]{*}{1.71} & 100 & \multirow{2}[4]{*}{2.25} \bigstrut\\
\cline{2-3}\cline{5-5}\cline{7-7}  & \textbf{CT ALLELIC} & 90.75 &   & 100 &   & 100 &  \bigstrut\\
\hline
\multirow{2}[4]{*}{\textbf{2}} & \textbf{BT A} & 100 & \multirow{2}[4]{*}{2.11} & 100 & \multirow{2}[4]{*}{2.67} & 100 & \multirow{2}[4]{*}{6} \bigstrut\\
\cline{2-3}\cline{5-5}\cline{7-7}  & \textbf{CT ALLELIC} & 100 &   & 100 &   & 100 &  \bigstrut\\
\hline
\hline
\multicolumn{2}{|c|}{\textbf{Allele B}} & \multicolumn{2}{c|}{\boldmath{}\textbf{MAF = 0.05 $\varphi_B \in [0,1.05]$}\unboldmath{}} & \multicolumn{2}{c|}{\boldmath{}\textbf{MAF = 0.2 $\varphi_B \in [0, 1.25]$}\unboldmath{}} & \multicolumn{2}{c|}{\boldmath{}\textbf{MAF = 0.4 $\varphi_B \in [0, 1.67]$}\unboldmath{}} \bigstrut\\
\hline
\boldmath{}\textbf{Fixed $\varphi_B$}\unboldmath{} & \textbf{MODEL} & \boldmath{}\textbf{POWER (in $\%$)}\unboldmath{} & \boldmath{}\textbf{$OR_a$}\unboldmath{} & \boldmath{}\textbf{POWER (in $\%$)}\unboldmath{} & \boldmath{}\textbf{$OR_a$}\unboldmath{} & \boldmath{}\textbf{POWER (in $\%$)}\unboldmath{} & \boldmath{}\textbf{$OR_a$}\unboldmath{} \bigstrut\\
\hline
\multirow{2}[4]{*}{\textbf{0.8}} & \textbf{BT B} & 100   & \multirow{2}[4]{*}{6} & 100   & \multirow{2}[4]{*}{2.25} & 100   & \multirow{2}[4]{*}{1.63} \bigstrut\\
\cline{2-3}\cline{5-5}\cline{7-7}      & \textbf{CT ALLELIC} & 100   &       & 100   &       & 100   &  \bigstrut\\
\hline
\multirow{2}[4]{*}{\textbf{0.9}} & \textbf{BT B} & 100   & \multirow{2}[4]{*}{3.22} & 100   & \multirow{2}[4]{*}{1.56} & 96.97 & \multirow{2}[4]{*}{1.28} \bigstrut\\
\cline{2-3}\cline{5-5}\cline{7-7}      & \textbf{CT ALLELIC} & 100   &       & 100   &       & 96.97 &  \bigstrut\\
\hline
\multirow{2}[4]{*}{\textbf{0.95}} & \textbf{BT B} & 100   & \multirow{2}[4]{*}{2.05} & 86.36 & \multirow{2}[4]{*}{1.26} & 48.67 & \multirow{2}[4]{*}{1.13} \bigstrut\\
\cline{2-3}\cline{5-5}\cline{7-7}      & \textbf{CT ALLELIC} & 100   &       & 86.36 &       & 48.7  &  \bigstrut\\
\hline
\multirow{2}[4]{*}{\textbf{1.05}} & \textbf{BT B} & 100   & \multirow{2}[4]{*}{0.048} & 90.96 & \multirow{2}[4]{*}{0.76} & 49.51 & \multirow{2}[4]{*}{0.88} \bigstrut\\
\cline{2-3}\cline{5-5}\cline{7-7}      & \textbf{CT ALLELIC} & 100   &       & 90.96 &       & 49.53 &  \bigstrut\\
\hline
\multirow{2}[4]{*}{\textbf{1.1}} & \textbf{BT B} & NA    & \multirow{2}[4]{*}{doesn't exist} & 100   & \multirow{2}[4]{*}{0.55} & 97.62 & \multirow{2}[4]{*}{0.77} \bigstrut\\
\cline{2-3}\cline{5-5}\cline{7-7}      & \textbf{CT ALLELIC} & NA    &       & 100   &       & 97.63 &  \bigstrut\\
\hline
\multirow{2}[4]{*}{\textbf{1.2}} & \textbf{BT B} & NA    & \multirow{2}[4]{*}{doesn't exist} & 100   & \multirow{2}[4]{*}{0.17} & 100   & \multirow{2}[4]{*}{0.58} \bigstrut\\
\cline{2-3}\cline{5-5}\cline{7-7}      & \textbf{CT ALLELIC} & NA    &       & 100   &       & 100   &  \bigstrut\\
\hline
\end{tabular}%
}

\footnotesize{ $^1$ First table is for allele A model and the second table for allele B model. For each effect size $\varphi$ and each minor allele frequency $\MAF$ the odds ratio of the allelic model $OR_a$ is reported using (\ref{OREF}), at a fixed confidence level $\epsilon=0.05$.}
\end{table}

\section{Application to Real Data}
In order to apply the BT approach to real data, a GWAS dataset with $1,237,567$ SNPs (where $N_0=468$ controls and $N_1=757$ cases) was used (Section~\ref{SectionRealData}).

Table~\ref{asTGEN} describes the parameters of the allelic probability distribution for cases, controls, and total population. As already shown in Figure~\ref{ejem2}, there are no allele frequencies lower than $0.01$; however, there exists some noise close to the extreme values.

\begin{table}[h!]
\centering
\caption{\textbf{\label{asTGEN} Summary of the TGEN allelic probabilities sample for cases, controls, and cases and controls.}}
\begin{tabular}{|c|}
\hline
TGEN\_impQC2: \\ 1,237,567 SNPs, 1225 individuals, 757 cases and 468 controls.\\
\hline
\begin{tabular}{r|ccc}
              & Cases \& Controls &  Cases &  Controls  \\ \hline
Empirical mean $\widehat{\mu}$  &    0.500065   &  0.500063       &      0.500065         \\
Empirical variance $\widehat{v}$  &  0.0916691     &   0.0917528      &    0.0917662     \\
Minimum $\widehat{v}$  &  0.01     &   0.01      &    0.01     \\
\textbf{Truncation $P_1$}  &    \textbf{ 0.018605 }   &     \textbf{0.018092 }      &   \textbf{  0.018233  }   \\
\textbf{$\widehat{\alpha}$}  &   \textbf{0.666681 }  &   \textbf{ 0.671064 }    &      \textbf{ 0.669102  }         \\	
\textbf{$\widehat{\delta}$}  &  \textbf{0.01871}         &  \textbf{0.01561}       &   \textbf{0.01623}      \\
Mean Squared Error &   0.0128   &      0.0154     &      0.013      \\ \hline
\end{tabular}
\end{tabular}

\end{table}

The calculated probability of being case with a genotype characteristic $s$ is very sensitive to parameter $M_s$. This is shown in Figure~\ref{PDFBsMFijoH0}, where $b_s$ is plotted for different values of $M_s$. This implies that results depend on the frequency of having the genotype $s$ and the sample size.
While making inference on the correlation of a phenotype and a genotype, one should take into account the number of individuals with such a genotype in the GWAS.

\begin{figure}[hb!]
    \begin{center}
        \begin{tabular}{cc}
         \includegraphics[height=8cm]{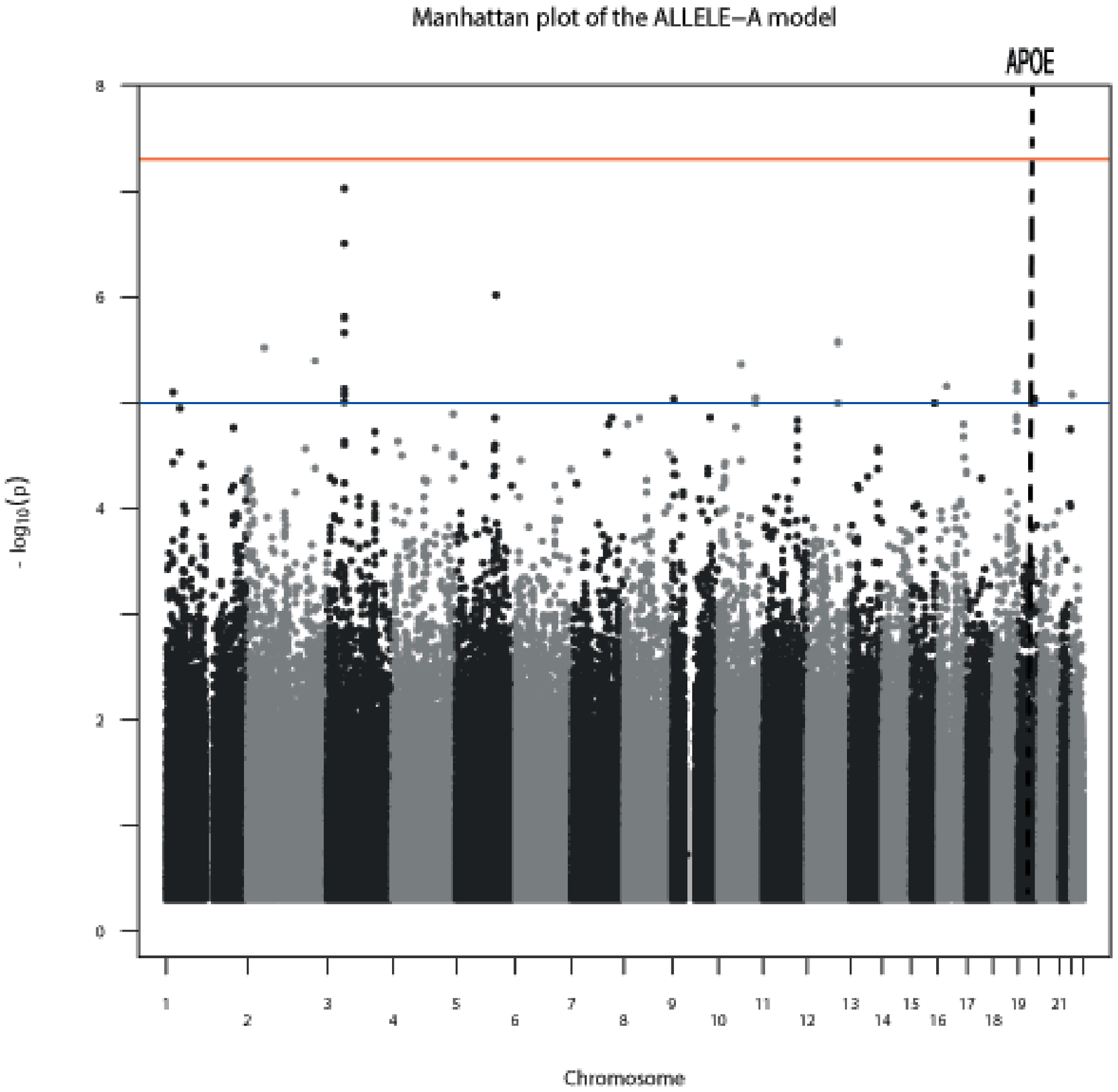} &
         \includegraphics[height=8cm]{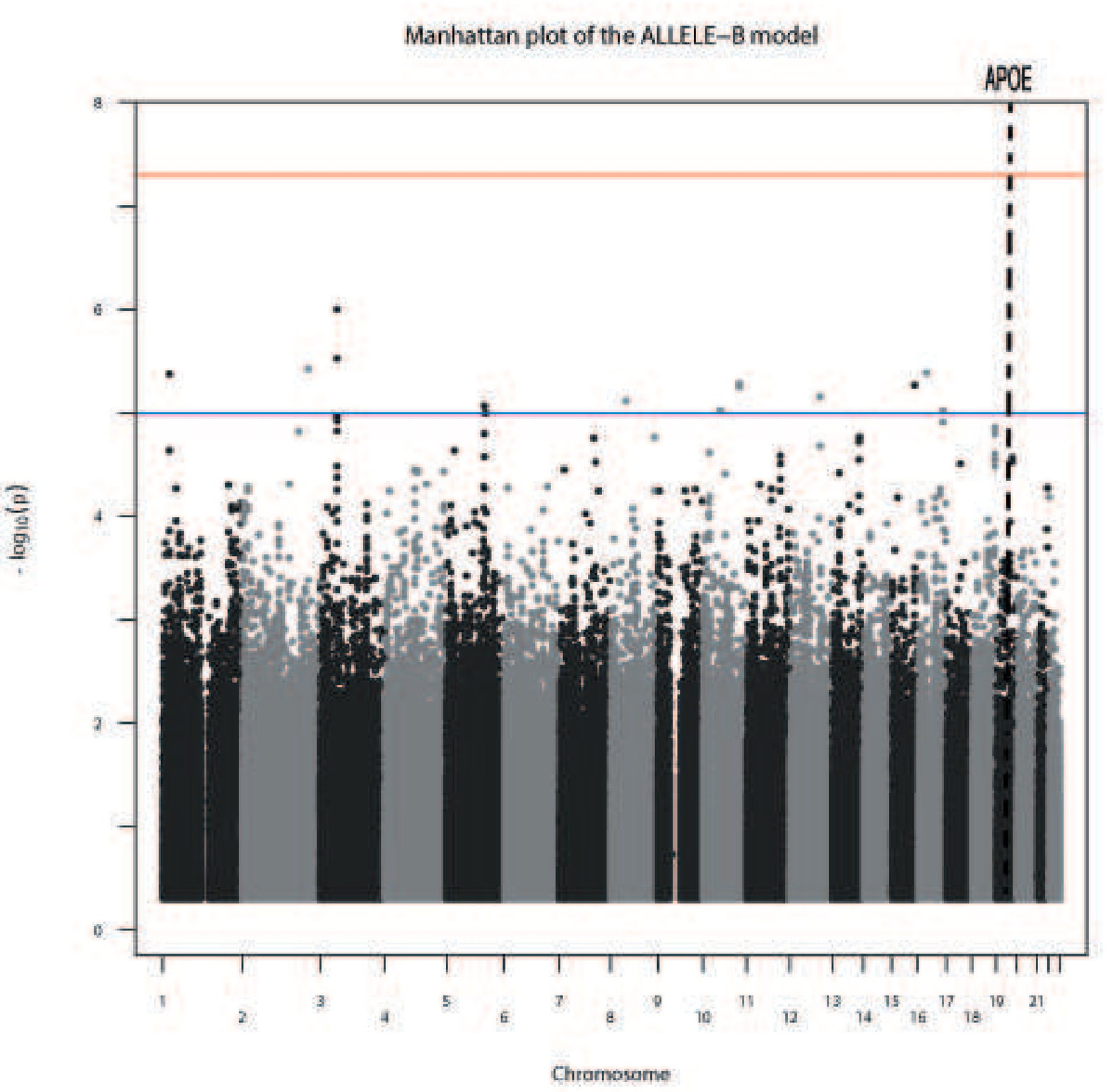}\\
	    \end{tabular}
    \end{center}
\caption{\label{Manhattan-plot} Manhattan plot of the new GWAS proposed with both alleles model in BT.}
\end{figure}

Figure~\ref{Manhattan-plot} shows the Manhattan plot applied to the BT in the Allele $A$ model within the sample described above (TGEN). Associated SNP on chromosome 19 is not displayed in the figure because of the chosen scale. This SNP is related to the apolipoprotein E (APOE) with a significant $p$-value$=1.33\cdot 10^{-42}$ on BT.

Note that the CT is a one-tailed test where the null hypothesis of no association is rejected if $p-$value is lower than $\epsilon$ and the direction of the effect determines the risk or protective role of the SNP. However the BT is a two-tailed test, where the null hypothesis of no association is rejected if $p$-value is lower than $\epsilon/2$ or bigger than $1-\epsilon/2$. In this case, the risk or protective role of the SNP was determined not only by the direction of the effect but also by the region of rejection. In other words, $p$-value$\leq \epsilon/2$ implies $\varphi_s \geq 1$ and \emph{vice-versa}.
Therefore, the $p$-value information is enough to define the risk or protection association.

\subsection{Measuring the concordance between CT and BT}

The agreement between two sets of results could be measured by the Kappa index ($K$) of agreement (\cite{Cohen1960}). In this case, we used two categories: being in association or not, with different levels of confidence. Kappa index is the estimator of agreement, compared in the paired models. At perfect agreement, $K$ equals to one, while agreement given by chance gives a value of $K$ close to zero.

Table~\ref{kappa} presents the allelic model comparison with their corresponding BT models (Allele A and Allele B). Notice that BT null hypothesis rejecting region contains approximately the expected number of significant SNPs than its corresponding confidence level. Nonetheless, there are some SNPs that reach significance in BT, but not in CT.

\begin{table}[h!]
\centering
\caption{\textbf{\label{kappa} The allelic paired models compared with the Kappa index of agreement, calculated in column $Kappa$, and the number of SNPs in concordance where both tests detect or not an association with the phenotype with a confidence level $\epsilon$. }$^1$}
\begin{tabular}{|r|c|c|c|c|c|c|}
\hline
\multicolumn{1}{|c|}{\textbf{K1}} &   & \multicolumn{5}{c|}{\textbf{ALLELE A}} \bigstrut\\
\hline
\multicolumn{1}{|c|}{} & cl & Kappa & Pos Conc & CT-BT+ & CT+BT- & Neg Conc \bigstrut\\
\hline
\multicolumn{1}{|c|}{\multirow{3}[6]{*}{\textbf{ALLELIC}}} & $\epsilon=0,05$ & 0.97 & 61861 & 1944 & 1651 & 1172111 \bigstrut\\
\cline{2-7}\multicolumn{1}{|c|}{} & $\epsilon=0,01$ & 0.94 & 11792 & 914 & 576 & 1224285 \bigstrut\\
\cline{2-7}\multicolumn{1}{|c|}{} & $\epsilon=0,001$ & 0.777 & 946 & 314 & 228 & 1236079 \bigstrut\\
\hline
\multicolumn{1}{|c|}{\textbf{K2}} &   & \multicolumn{5}{c|}{\textbf{ALLELE B}} \bigstrut\\
\hline
  & cl & Kappa & Pos Conc & CT-BT+ & CT+BT- & Neg Conc \bigstrut\\
\hline
\multicolumn{1}{|c|}{\multirow{3}[6]{*}{\textbf{ALLELIC}}} & $\epsilon=0,05$ & 0.993 & 63164 & 456 & 348 & 1173599 \bigstrut\\
\cline{2-7}\multicolumn{1}{|c|}{} & $\epsilon=0,01$ & 0.985 & 12263 & 266 & 105 & 1224933 \bigstrut\\
\cline{2-7}\multicolumn{1}{|c|}{} & $\epsilon=0,001$ & 0.976 & 1163 & 46 & 11 & 1236347 \bigstrut\\
\hline
\end{tabular}

\footnotesize{\begin{flushleft}
$^1$ \emph{Pos Conc} is the positive concordance, where both BT and CT reject the null hypothesis of no association. Similarly, \emph{Neg Conc} is the negative concordance, where both BT and CT accept the null hypothesis. \emph{BT$+$CT$-$} (respectively \emph{BT$-$CT$+$}) represents the number of SNPs in discordance on both tests, where BT refuse (respectively, accept) an association and CT accept (respectively, reject) it. \end{flushleft}}
\end{table}

\begin{table}[h!]
\centering
\caption{\textbf{\label{kappa2} Summary of the number of SNPs detected with association in CT Allelic model and both BT allele models ($A$ and $B$) and their differences, with a confidence level $\epsilon$.}$^1$ }
\begin{tabular}{|c|c|c|c|}
\hline
\multicolumn{4}{|c|}{Either allele A or B} \bigstrut\\
\hline
      & Pos Conc & CT- BT+ & CT+ BT- \bigstrut\\
\hline
$\epsilon=0.05$ & 63512 & 2273  & 0 \bigstrut\\
\hline
$\epsilon=0.01$ & 12368 & 1137  & 0 \bigstrut\\
\hline
$\epsilon=0.001$ & 1174  & 354   & 0 \bigstrut\\
\hline
\end{tabular}%

\footnotesize{
\begin{flushleft} $^1$  \emph{Pos Conc} is the positive concordance, where both BT and CT reject the null hypothesis of no association. \emph{BT$+$CT$-$} (respectively \emph{BT$-$CT$+$}) represents the number of SNPs in discordance on both tests, where BT refuse (respectively, accept) an association and CT accept (respectively, reject) it.\end{flushleft}}
\end{table}

When comparing CT with BT model, the Kappa index showed some differences in the association tests. For instance, a remarkable concordance ($K=0.976$, $\epsilon=0.001$) was observed while comparing the CT Allelic model with the BT Allele-B model (K2, Table~\ref{kappa}).

A remarkable result from Table~\ref{kappa} summarized in Table~\ref{kappa2}, reveals that all the SNPs, detected in association with the phenotype in CT test, were also detected in association with the BT test in any allele models ($A$ or $B$).
However, several SNPs detected being associated with the BT (in Allele $A$ or $B$ model) were not detected by the CT (see Table~\ref{kappa2}). For example, for a given confidence level $\epsilon=0.001$, the $23\%$ of positives, that are SNPs refusing $H_0$, were detected by BT allelic models since CT allelic model could not detect them.

The difference between BT results and the corresponding CT can be surprising at times. Indeed, one would expect much more concordance among CT and BT estimates. In order to understand the observed differences, a comparison of parameters space is advisable.

\subsection{Comparisons with Principal Component Analyses}

\begin{figure}[b!]
    \begin{center}
    \begin{tabular}{c}
	\includegraphics[height=15cm]{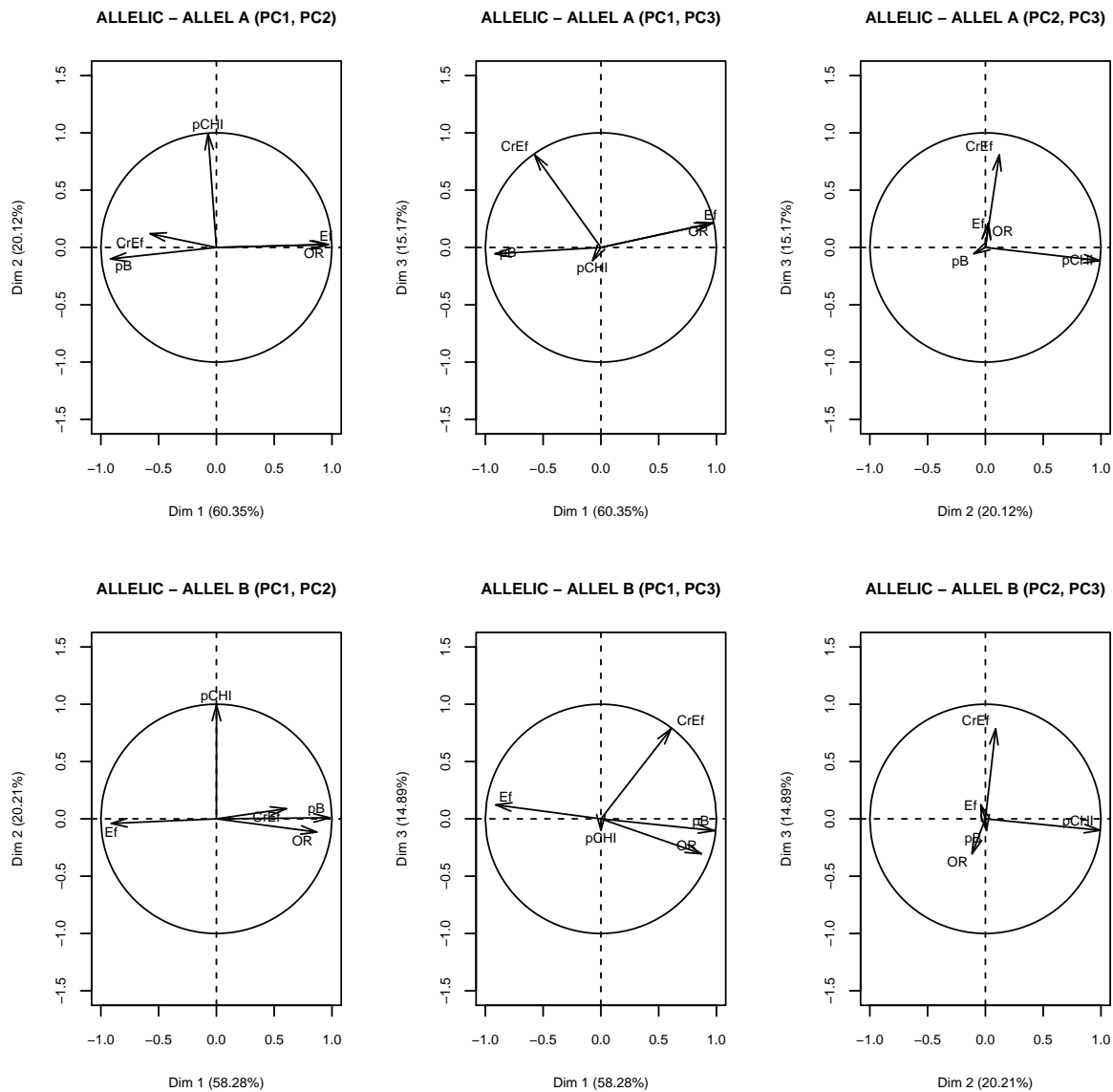}
	\end{tabular}
	    \end{center}
\caption{\label{pca} The weights of the parameters in each Principal Component (the first with the second, the first with the third and the second with the third, respectively), taken for the variables of the Allelic model of CT is compared to the Allele A and B models of BT. It describes the eigenvalues (in $\%$) of each principal component.
}
\end{figure}

Principal Components Analysis (PCA) is a measurement that shows the relationship between two sets of parameters. We performed here PCA for analyzing the relationship between the parameters from CT ($p$-value of the $\chi^2$ distribution and the odds-ratio $OR_i$ corresponding to the model) and from BT ($p$-value of the beta distribution, the  effect $\varphi_s$ and the critical effect $\Psi_s^\epsilon$). PCA finds which parameters explain the maximum variability and also sorts the components (transformed variables) by their explained variance; the original variables have corresponding weights in each components (\cite{Pearson1901}).

In summary, PCA found which variables explain the maximum variability in the CT and BT results, the percentage of explained variance and the intensity.


%

Different variability explanation between allelic CT model and both alleles BT models using three principal components is shown in Figure~\ref{pca} (a. and b.). These figures show the proportion explained variance is contributed by each parameter in the principal components. The $p$-value and $\varphi_s$ of the BT, and the odds-ratio $OR$ of the CT explain the same variance, which means these three parameters have the same direction as PC1.
As expected, the $p$-value of CT (called as $pCHI$) explains the variance along the second principal component. However, the \emph{Critical effect} of the BT $\Psi_s^\epsilon$ complements the third principal component, which means that some variability cannot be explained without it. As a matter of fact, the Critical effect contributes to the $87.72\%$ of the third principal component, which explains the $15.17\%$ of the total variance.


Therefore, the principal component analysis reflects that the results of both methods must differ in a region of the explaining variance.

\subsection{Ranking of association}

\begin{table}[h!]
\centering
\caption{\textbf{\label{ranking20} Top smallest $p-$values computed with the BT for the Alleles A and B models ($<10^{-5}$) in the TGEN dataset, described above (* symbol represents that several markers in LD have been omited in a single signal with the lowest $p-$value).}$^1$ }
\scalebox{0.65}{
\begin{tabular}{cccccccccccccccc}
CHR   & BP    & SNP   & MAF   & A1    & A2    & \textbf{pvBT} & BT-Allele & $\varphi_A$ & $\varphi_B$ & $Psi_A^{0.05}$ & $Psi_B^{0.05}$ & \textbf{pvCT} & $OR_a$  & $M_A$ & $M_B$ \\
19    & 50114786 & rs4420638 & 0,308 & G     & A  & \textbf{1,33E-42} & 1 & 2,44  & 0,72  & 2,12  & 0,76  & \textbf{2,21E-34} & 3,39598 & 754   & 1694 \\
3     & 52481466* & rs6784615 & 0,069 & C     & T  & \textbf{9,43E-08} & 1 & 2,30  & 0,95  & 1,64  & 0,97  & \textbf{2,09E-06} & 2,43106 & 170   & 2278 \\
5     & 121942614 & rs11953981 & 0,058 & G     & A     & \textbf{9,60E-07} & 1 & 2,45  & 0,95  & 1,70  & 1,03  & \textbf{1,83E-05} & 2,56819 & 128   & 2086 \\
12    & 93847903 & rs249152* & 0,193 & A     & G     & \textbf{2,64E-06} & 1 & 1,49  & 0,91  & 1,26  & 0,96  & \textbf{1,16E-05} & 1,63099 & 464   & 1938 \\
2     & 51804001 & rs17864593* & 0,016 & A     & G     & \textbf{3,02E-06} & 1 & 5,40  & 0,98  & 2,24  & 1,00  & \textbf{NA} & 5,51085 & 39    & 2329 \\
2     & 205265992 & rs41511746 & 0,017 & G     & C    & \textbf{3,76E-06} & 2 & 0,25  & 1,02  & 0,47  & 1,01  & \textbf{7,82E-06} & 0,241278 & 42    & 2408 \\
16    & 26556972 & rs12162084 & 0,157 & A     & G    & \textbf{4,10E-06} & 2 & 0,66  & 1,08  & 0,80  & 1,04  & \textbf{9,34E-06} & 0,611134 & 385   & 2063 \\
1     & 21773864 & rs1536934 & 0,069 & A     & G   & \textbf{4,21E-06} & 2 & 0,51  & 1,05  & 0,69  & 1,02  & \textbf{9,17E-06} & 0,479325 & 153   & 2057 \\
10    & 68271216 & rs4486514 & 0,04  & C     & T     & \textbf{4,30E-06} & 1 & 2,71  & 0,97  & 1,69  & 1,01  & \textbf{7,54E-05} & 2,80962 & 93    & 2231 \\
10    & 112534645 & rs7077757* & 0,211 & T     & C     &  \textbf{5,22E-06} & 2 & 0,71  & 1,10  & 0,83  & 1,05  & \textbf{1,21E-05} & 0,645986 & 515   & 1925 \\
15    & 90462008 & rs11074041* & 0,142 & C     & G   & \textbf{5,45E-06} & 2 & 0,64  & 1,08  & 0,78  & 1,03  & \textbf{1,23E-05} & 0,597382 & 335   & 2019 \\
18    & 71895842 & rs359739* & 0,192 & A     & C     & \textbf{6,64E-06} & 1 & 1,46  & 0,92  & 1,22  & 0,95  & \textbf{2,58E-05} & 1,59007 & 470   & 1976 \\
8     & 47534249 & rs4313171 & 0,067 & T     & C      & \textbf{7,68E-06} & 2 & 0,53  & 1,05  & 0,71  & 1,03  & \textbf{1,65E-05} & 0,502977 & 165   & 2285 \\
22    & 17002691 & rs12168275 & 0,038 & G     & C     & \textbf{8,38E-06} & 1 & 2,72  & 0,97  & 1,65  & 1,00  & \textbf{0,00013} & 2,80573 & 87    & 2215 \\
5     & 117744488 & rs6595122 & 0,282 & C     & A     &  \textbf{8,61E-06} & 2 & 0,75  & 1,12  & 0,86  & 1,06  & \textbf{2,06E-05} & 0,667943 & 637   & 1621 \\
19    & 50323656 & rs17643262* & 0,082 & A     & G     & \textbf{9,07E-06} & 1 & 1,88  & 0,95  & 1,37  & 0,97  & \textbf{5,69E-05} & 1,97991 & 192   & 2148 \\
9     & 5583190 & rs10815248 & 0,042 & A     & T     & \textbf{9,23E-06} & 1 & 2,49  & 0,97  & 1,58  & 0,99  & \textbf{0,00011} & 2,57753 & 101   & 2309 \\
10    & 53698470 & rs10824310 & 0,065 & T     & C     &  \textbf{9,42E-06} & 2 & 0,52  & 1,05  & 0,71  & 1,03  & \textbf{2,02E-05} & 0,500702 & 159   & 2291 \\
16    & 77974064 & rs7192960 & 0,129 & T     & C      & \textbf{9,57E-06} & 2 &  0,64  & 1,07  & 0,79  & 1,04  & \textbf{2,12E-05} & 0,59958 & 316   & 2134 \\
\end{tabular}
}

\footnotesize{
\begin{flushleft} $^1$ The columns represent from left to right the number of chromosome (Chr), the base pair position of the SNP (BP), the name of the SNP (SNP), the MAF value, code for minor allele (A1), code for the other allele (A2), the lowest $p-$value of BT allelic models (pvBT), the number of the allele model with lowest $p-$value of BT allelic models (BT-Allele), the effect of BT allelic models ($\varphi_A$ and $\varphi_B$), the Critical effect of BT allelic models ($Psi_A^{0.05}$ and $Psi_B^{0.05}$), the $p-$value of the CT (pvCT), the odds-ratio of this model ($OR_a$) and the number of individuals which present this alleles ($M_A$ and $M_B$). \end{flushleft}}
\end{table}

Top results for BT analysis are presented in Table~\ref{ranking20}. Briefly, we displayed the ranking of the smallest $p$-values in BT ($<10^{-5}$) including either A or B allele estimation and their corresponding results using conventional one degree of freedom chi-squared tests applied to MAF (CT). We also included the ranking order observed for each marker using both approaches (BT and CT). As expected, most SNPs, but not all of them, exhibited very similar $p-$values and ranking order. This result is fully compatible with the performance of the  global kappa index (Table~\ref{kappa}).

A whole and comprehensive ranking is also included (Supplementary file ``RankingTGENpvalueBT.csv'').  Of course, both strategies identified SNP marker rs4420638 on chromosome 19, located 14 kilobase pairs distal to the APOE epsilon variant as the major finding. This observation was previously reported by TGEN researchers (\cite{Coon2007}). APOE locus is the most important genetic risk factor for Alzheimer's disease reported to date (\cite{Corder1993}).  Notably, we found APOE locus significance more than eight orders of magnitude smaller using BT compared to CT (Table~\ref{ranking20}).

The rest of top markers also display smaller $p$-values using BT compared to CT calculations. This can be explained by the fact that the CT is a one-tailed test where the null hypothesis of no association is rejected if $p$-value turns out to be less than   $\epsilon$, while the BT is a two-tailed test (where the risk or protective role of the SNP is known by the $p$-value) and the rejection area for BT is one half that for CT ($\epsilon/2$).

Notice that the non-available $p$-value for the CT in the Chromosome 2 is due to the lack of data in a given cell. Anyway, this can be corrected by the Fisher test, which is not generally recommended due its computational cost.

\section{Discussion}
 Any description of allelic distributions in the genome must begin by constructing a model of allelic probabilities. However, this important point remains unaddressed in many scientific literature, at least to the best of our knowledge. Indeed, almost all simulations, made for testing a GWAS method, assume that allelic frequencies follow a uniform distribution. Here a model for allelic probability distribution is proposed and tested. In addition, we also improved the commonly used uniform distribution model.
	
The proposed alleles probability model, $\AL_t$, offers a common scenario for each data set, characterizing noise. The truncation when not known can be estimated using the empirical distribution of the allelic probabilities. Regardless of the truncation,  the remaining noise (quality control, stratification, insufficient population, etc...) is gathered with  the variable $\D_t$.

The model depends only on the population of cases, controls, and the number of occurrences of the feature in the sample.
Although the study was focused in a univariate analysis and $s$ can be taken as the genotype in a single SNP, note that $s$ is, in general, a vector and can represent any desired condition. For instance, $s$ can include, along the genotype, the sex, age or other information. Furthermore, this vector can also include more than one genotype whether considering their interactions (epistasis) or not.

 Although the examples used in this work described the allelic model, any other models could also be also analyzed without further modification of the method.
	
	The new genome-wide association method (BT) has some commonalities with the conventional one. However, BT offers a remarkable ranking variability that might represent genuine signals.
Novel candidate must be corroborated by intensive replication, multiple testing control, and meta-analysis using other datasets.  We are aware that if  BT can isolate novel loci, which generally missed while using traditional approaches, its application may help to uncover a fraction of the missing piece of heritability still pending for multiple complex traits.
Consequently, next step of our research would be the generation of well powered meta-analysis of BT rankings. The isolation of genome-wide significant signals, and ultimately, the replication in independent series may help to measure the utility of this novel GWAS approach.

\appendix{}
\section{Allelic probabilities variations}\label{Appendix-Variation}

The difference between truncated universal allelic probability, $\A_t$, and a priori truncated local allelic probability, $\AL_t$, can be described in a single expression, $\D_t=\A_t-\AL_t$ the divergence of local MAF from commercial chips truncation for small $\MAF$.
Since the expected values of $\A_t$ and $\AL_t$ should be equal, we assumed that the expected value of $\D_t$ as zero, and its seemed a plausible assumption that $\D_t$ follows a normal distribution. However, this would give rise to local allelic probabilities out of the interval $[0,1]$. To avoid this, we assumed that the values of $\A_t$ in the SNPs present in the chip belong to the interval $[t,1-t]$, and the observed allelic probability takes values in $[0,1]$. Therefore, the observed values of $\D_t$ belong to the interval $[0,1]$. Hence, we modeled $\D_t$ as a truncated normal distribution $NT(0,\delta,-t,t)$. Its probability density function is given by
\begin{equation}
h(x)=\left\{\matriz{{ll} \frac{\frac{1}{\sqrt{2\pi}}e^{\frac{x^2}{-2\delta^2}}}{{\int_{-t}^{t}\frac{1}{\sqrt{2\pi}}e^{\frac{x^2}{-2\delta^2}}}}, & \text{if } -t<x<t;\\&\\0, & \text{otherwise},} \right.
\end{equation}

The mean of $\D_t$ is $0$ and its variance is given by
$$\sigma^2(\D_t)=\sigma_D^2=\delta^2\left( 1-\frac{2\frac{t}{\delta} (\frac{t}{\delta})}{2\Phi\left(\frac{t}{\delta}\right)-1} \right)$$
where $\phi$ and $\Phi$ are the PDF and CDF of the standard normal distribution $N(0,1)$, respectively.
If we take $x=\frac{t}{\delta}$ then
    $$\sigma_D^2=t^2\left( \frac{1-\frac{2x \phi(x)}{2\Phi(x)-1}}{x^2} \right)=t^2\left( \frac{1}{x^2}-\frac{2 \phi(x)}{x(2\Phi(x)-1)} \right).$$

As $\A_t$ and $\D_t$ were assumed to be independent the density function of $$\AL_t=\A_t+\D_t$$ is the convolution of the probability density functions of $\A_t$ and $\D_t$:
\begin{equation}\label{convo}
(g*h)(z)=\int_{-\infty}^{+\infty}g(z-x)h(x) dx
\end{equation}

\section{Properties of truncated allelic distribution}\label{Appendix-Lema}
As $\A_t$ is distributed as $\Beta(\alpha,\alpha,t,1-t)$, the mean of $\A_t$ should be $\mu_{\alpha,t}=0.5$. This can be easily deduced from its PDF $g(a)$ (see (\ref{betru-ok})).

In order to calculate the variance of $\A_t$, it is convenient to use the incomplete beta function, which is defined as
    $$\B(t;\alpha,\beta)=\int_0^t a^{\alpha-1}(1-a)^{\beta-1}da.$$
This function is related with the beta function $\Beta(\alpha,\beta)$ using the regularized incomplete beta function:
    $$ I_t (\alpha,\beta)= \frac{\B(t;\alpha,\beta)}{\B(\alpha,\beta)} $$
They satisfy the following properties (\cite{Paris2010})
\begin{eqnarray}\label{BetaIncompleta}
\B(\alpha+1,\beta)&=&\frac{\alpha}{\alpha+\beta} \B(\alpha,\beta) \nonumber \\
I_t (\alpha,\beta)&=& 1- I_{1-t}(\beta,\alpha)  \\
I_t (\alpha+1,\beta)&=&I_t(\alpha,\beta) - \frac{t^\alpha (1-t)^\beta}{\alpha \B(\alpha,\beta) } \nonumber
\end{eqnarray}

By (\ref{MediaVarianzaalphabeta-ok}), the variance $\sigma^2_{\alpha}$ of $\Beta(\alpha,\alpha)$ is $\frac{1}{4(2\alpha+1)}$. Using (\ref{BetaIncompleta}) we calculated the variance of the truncated beta distribution:
\begin{eqnarray*}
 \sigma_{\alpha,t}^2 & = &  \int_t^{1-t} \frac{a^{\alpha+1}(1-a)^{\alpha-1}}{\int_t^{1-t} r^{\alpha-1}(1-r)^{\alpha-1}} dr - \mu_{\alpha,t}^2  \\
  & = & \frac{ \B(\alpha+2,\alpha) (I_{1-t}(\alpha+2,\alpha) - I_t (\alpha+2,\alpha))}{\B(\alpha,\alpha)(I_{1-t}(\alpha,\alpha) - I_t (\alpha,\alpha))} -\frac{1}{4} \\
 & = & \frac{ \alpha +1}{4\alpha + 2} \frac{\frac{t^\alpha (1-t)^\alpha (4t-2)}{(\alpha+1)\B(\alpha,\alpha)} + 1- 2 I_t (\alpha,\alpha)}{1 - 2 I_t (\alpha,\alpha)} -\frac{1}{4} \\
  & = & \frac{1}{(4\alpha + 2)\B(\alpha,\alpha)} \frac{t^\alpha (1-t)^\alpha (4t-2)}{1 - 2 I_t (\alpha,\alpha)} + \frac{ \alpha +1}{4\alpha + 2} -\frac{1}{4}
\end{eqnarray*}

Therefore, the variance of $\A_t$ is
\begin{equation}\label{sigma_u-at}
    \sigma_u^2 = \sigma_{\alpha,t}^2 = \frac{1}{(4\alpha + 2)\B(\alpha,\alpha)} \frac{t^{\alpha} (1-t)^{\alpha} (4t-2)}{1 - 2 I_t (\alpha,\alpha)} + \frac{ \alpha +1}{4\alpha + 2} -\frac{1}{4}.
\end{equation}
Next, we needed the following technical lemma.

\begin{lemma}\label{1/3}
Let $f(x)=\frac{1}{x^2}-\frac{2 \phi(x)}{x(2\Phi(x)-1)}$. Then $f(x)$ is decreasing for $x>0$, $f(x)<\frac{1}{3}$ for every $x>0$ and $\lim_{x\rightarrow 0} f(x) =\frac{1}{3}$.
\end{lemma}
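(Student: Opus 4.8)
The plan is to reinterpret $f$ probabilistically so that all three claims follow from a single variance computation, rather than from grinding out $f'(x)$. Writing $x$ for the truncation ratio, equation~(\ref{sigma_d_eq}) with $\delta=1$ identifies $1-\frac{2x\phi(x)}{2\Phi(x)-1}$ as the variance of a standard normal variable $W_x$ truncated to $[-x,x]$. Dividing by $x^2$ gives exactly $f(x)=\frac{1}{x^2}\,\mathrm{Var}(W_x)=\mathrm{Var}(W_x/x)$. Now $S:=W_x/x$ is supported on $[-1,1]$ with density proportional to $e^{-x^2 s^2/2}$, and its mean vanishes by symmetry; hence, setting $\lambda=x^2$,
\begin{equation*}
f(x)=V(\lambda):=\frac{\int_{-1}^{1} s^2\, e^{-\lambda s^2/2}\,ds}{\int_{-1}^{1} e^{-\lambda s^2/2}\,ds}.
\end{equation*}
This recasts the problem as understanding how the variance of a Gaussian truncated to a fixed interval changes with its scale parameter $\lambda$.

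First I would prove monotonicity. Since $\lambda=x^2$ is increasing for $x>0$, it suffices to show $V(\lambda)$ is strictly decreasing in $\lambda$. Differentiating under the integral sign in the exponential family $p_\lambda(s)\propto e^{-\lambda s^2/2}$ and using $\partial_\lambda e^{-\lambda s^2/2}=-\tfrac12 s^2 e^{-\lambda s^2/2}$, a direct quotient-rule computation collapses to
\begin{equation*}
V'(\lambda)=-\tfrac12\Big(\mathbb{E}_\lambda[S^4]-(\mathbb{E}_\lambda[S^2])^2\Big)=-\tfrac12\,\mathrm{Var}_\lambda(S^2)\le 0,
\end{equation*}
with strict inequality because $S^2$ is non-degenerate. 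Thus $V$ is strictly decreasing, and therefore $f$ is strictly decreasing on $(0,\infty)$.

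For the limit, $V(\lambda)$ is continuous at $\lambda=0$ (both integrals depend continuously on $\lambda$ and the denominator stays bounded away from $0$), so $\lim_{x\to 0}f(x)=V(0)=\int_{-1}^{1}\tfrac12 s^2\,ds=\tfrac13$, the variance of the uniform law on $[-1,1]$. Finally, $f(x)<\tfrac13$ for every $x>0$ follows by combining the two facts: for fixed $x>0$ and any $y\in(0,x)$ strict monotonicity gives $f(x)<f(y)$, and letting $y\to 0^+$ yields $f(x)\le\tfrac13$; applying the same bound to the specific choice $y=x/2$ upgrades this to the strict chain $f(x)<f(x/2)\le\tfrac13$.

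The only genuine subtlety is the opening reduction: one must confirm that $f$ really is the variance of the rescaled truncated normal, which is immediate from the formula already recorded in~(\ref{sigma_d_eq}), and that differentiation under the integral sign is legitimate, which is routine since the integrands and their $\lambda$-derivatives are continuous and bounded on the compact interval $[-1,1]$. As an independent check I would keep the Taylor expansion $f(x)=\tfrac13-\tfrac{2}{45}x^2+O(x^4)$ in reserve, which corroborates both the value $\tfrac13$ of the limit and the sign of $f'$ near $0$. Once the exponential-family identity is in place the remaining monotonicity argument is essentially a one-line differentiation, so the main work is selecting the right change of variables rather than establishing any delicate estimate.
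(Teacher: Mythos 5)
Your proof is correct, and it takes a genuinely different route from the paper's. The paper argues by direct calculus: it establishes $f(x)<\tfrac13$ by introducing the auxiliary function $g(x)=2\Phi(x)-1+\frac{6x\phi(x)}{x^2-3}$ and tracking signs, obtains the limit by two applications of L'H\^opital's rule, and proves monotonicity through a considerably more involved chain --- the functions $\psi$, $\alpha$, $h$, the algebraic identity $(x^4-2x^2+3)^2(x^4+2x^2+9)-(x^6-x^4+5x^2-9)^2=32x^4$, and a comparison of $\psi(x)$ with the largest root of a quadratic --- none of which is easy to motivate. You instead observe that $f(x)$ is the variance of the law $p_\lambda(s)\propto e^{-\lambda s^2/2}$ on $[-1,1]$ with $\lambda=x^2$, which is exactly the content of \eqref{sigma_d_eq} after rescaling (and is itself a one-line integration by parts using $\phi'(w)=-w\phi(w)$ if you prefer not to cite the truncated-normal formula). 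With that reduction, monotonicity collapses to $V'(\lambda)=-\tfrac12\,\mathrm{Var}_\lambda(S^2)<0$, the limit is the variance $\tfrac13$ of the uniform law on $[-1,1]$, and the strict bound follows from the two. This is shorter, proves all three claims from one identity, and explains where the constant $\tfrac13$ comes from; the price is the (routine) justification of differentiation under the integral sign, which you correctly flag and which is harmless on the compact interval. Your closing step $f(x)<f(x/2)\le\tfrac13$ is sound, though once $V$ is known to be continuous at $0$ and strictly decreasing on $(0,\infty)$ the strict inequality $V(\lambda)<V(0)$ is immediate; your Taylor check $f(x)=\tfrac13-\tfrac{2}{45}x^2+O(x^4)$ is also accurate.
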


\begin{proof}
 First, we need to prove that $f(x)<0$, if $x>0$. For that, we consider the function $g(x)=2\Phi(x)-1+\frac{6x\phi(x)}{x^2-3}$. Having in mind that $\phi'(x)=-x\phi(x)$ and $\Phi'(x)=\phi(x)$, we have
    $$g'(x)=2\phi(x)\left(1+3\frac{(1-x^2)(x^2-3)-2x^2}{(x^2-3)^2}\right) = -\frac{4x^4\phi(x)}{(x^2-3)^2}\le 0.$$
Thus, $g$ is strictly decreasing function in each region where it is continuous, for example, the interval $(0,\sqrt{3})$.
Let $0<x<\sqrt{3}$. As $g(0)=0$, we deduce that $g(x)<0$. As $x^2-3<0$, we have $0<(x^2-3)g(x)=(2\Phi(x)-1)(x^2-3)+6x\phi(x)$. Pleanly $(2\Phi(x)-1)(x^2-3)+6x\phi(x)$ is also positive if $x\ge \sqrt{3}$. This proves that $(2\Phi(x)-1)(x^2-3)+6x\phi(x)>0$ for every $x>0$. Using that $2\Phi(x)-1,x^2>0$, we deduce that $\frac{3-x^2}{3x^2}-\frac{2\phi(x)}{x(2\Phi(x)-1)}<0$ and hence
    $$f(x)=\frac{1}{x^2}-\frac{2 \phi(x)}{x(2\Phi(x)-1)} = \frac{1}{3}+\frac{3-x^2}{3x^2}-\frac{2\phi(x)}{x(2\Phi(x)-1)}<\frac{1}{3},$$
as desired.

Using L'H\^{o}pital rule, we have $\lim_{x\rightarrow 0}\frac{2\Phi(x)-1}{x\phi(x)} = \lim_{x\rightarrow 0}\frac{2}{1-x^2} = 2$. Applying L'H\^{o}pital again, we have
    \begin{eqnarray*}
    \lim_{x\rightarrow 0} \left(\frac{1}{x^2}-\frac{2 \phi(x)}{x(2\Phi(x)-1)}\right) &=& \lim_{x\rightarrow 0} \frac{2\Phi(x)-1-2x \phi(x)}{x^2(2\Phi(x)-1)} \\
    &=& \lim_{x\rightarrow 0} \frac{x\phi(x)}{2\Phi(x)-1+x\phi(x)}
    = \lim_{x\rightarrow 0} \frac{1}{\frac{2\Phi(x)-1}{x\phi(x)}+1} =\frac{1}{3}
    \end{eqnarray*}

To prove that $f(x)$ is decreasing for $x>0$, we consider the following functions
    \begin{eqnarray*}
    \psi(x) &=& 2\Phi(x)-1 \\
     \alpha(x) &=& (x^4-2x^2+3) \sqrt{x^4+2x^2+9}+x^6-x^4+5x^2-9. \\
    h(x) &=& 2\psi(x)-x\phi(x)(x^2+1+ \sqrt{x^4+2x^2+9}).
    \end{eqnarray*}

We claim that $\alpha(x)>0$ for every $x>0$. For that we use the following equality
    \begin{equation}
     (x^4-2x^2+3)^2 (x^4+2x^2+9)-(x^6-x^4+5x^2-9)^2=32x^4.
    \end{equation}

Therefore
$$(x^4-2x^2+3)^2 (x^4+2x^2+9)>(x^6-x^4+5x^2-9)^2.$$
Moreover, $x^4-2x^2+3=(x^2-1)^2+2>0$ and, if $0<x<1$ then $x^6-x^4+5x^2-9<0$. Thus, if $0<x<1$ then
$(x^4-2x^2+3) \sqrt{x^4+2x^2+9}>-x^6+x^4-5x^2+9$, or equivalently $\alpha(x)>0$. This equality also holds for $x>1$ because both $x^4-2x^2+3$, $x^4+2x^2+9$ and $x^6-x^4+5x^2-9$ are increasing for $x>1$. Therefore if $x>1$ then
$\alpha(x)\ge \alpha(1)=2\sqrt{12}-4>0$. This proves the claim.

Let $x>0$. As
    \begin{eqnarray*}
     h'(x) &=& \phi(x)\left(4+x^2(x^2+1+ \sqrt{x^4+2x^2+9})-\left(3x^2+1+\sqrt{x^4+2x^2+9}+\frac{x(4x^3+4x)}{2\sqrt{x^4+2x^2+9}}\right)\right) \\
                 &=& \frac{\phi(x)}{\sqrt{x^4+2x^2+9}}\left( (x^4-2x^2+3) \sqrt{x^4+2x^2+9} + (x^2-1)(x^4+2x^2+9)-2x^2(x^2+1)\right) \\
                 &=& \frac{\phi(x)}{\sqrt{x^4+2x^2+9}}\left( (x^4-2x^2+3) \sqrt{x^4+2x^2+9} + x^6-x^4+5x^2-9\right) \\
                 &=& \frac{\phi(x)\alpha(x)}{\sqrt{x^4+2x^2+9}}
    \end{eqnarray*}
we conclude that $h$ is an increasing function, therefore, $h(x)>h(0)=0$. Equivalently
    \begin{equation}\label{psiMayor}
     \psi(x)>\frac{x\phi(x)(x^2+1+ \sqrt{x^4+2x^2+9}}{2}.
    \end{equation}
As the greatest root of the quadratic polynomial $q(T)=T^2-(x^3+x)\phi(x)T-2x^2\phi(x)^2$ is $\frac{x\phi(x)(x^2+1+\sqrt{x^4+2x^2+9})}{2}$, inequality (\ref{psiMayor}) implies that $q(\psi(x))>0$.
Hence
    $$f'(x)=-\frac{2}{x^3}-2\frac{-x^2\phi(x)\psi(x)-\phi(x)\psi(x)-2x\phi(x)^2}{x^2\psi(x)^2} = -\frac{2q(\psi(x))}{x^3\psi(x)^2}<0.$$
Thus $f$ is decreasing for $x>0$, as desired.
\end{proof}

By Lemma~\ref{1/3} we have $0\le \sigma_D^2 \le \frac{t^2}{3}$.

As $\A_t$ and $\D_t$ are independent, the variance of $\AL_t=\A_t+\D_t$ is $\sigma_l^2=\sigma_u^2+\sigma_D^2$, hence,
    $$\sigma_u^2 \le \sigma_l^2\le \sigma_u^2 +\frac{t^2}{3}.$$

Notice that the case $\sigma_l^2= \sigma_u^2 +\frac{t^2}{3}$ occurs for the highest degree of noise, i.e., when $\D_t$ has maximum variance $\sigma_D^2=\frac{t^2}{3}$.

\section{The effects of a genotype s: $\varphi_s$, $\Psi_s^{\epsilon}$}\label{understanding_case-control}

We defined the discrete random variable $\MC{F}$, which takes only two values $0$ and $1$ depending on whether the individual is a control or a case, respectively. Let $M_{0s}$ and $M_{1s}$ be the number of controls and cases with the given genotype $s$, respectively, and let $M_s=M_{0s}+M_{1s}$.
Let $b_s$ denote the theoretical probability of being a case for an individual with genotype $s$ in a GWAS with $N_0$ controls and $N_1$ cases.

We assumed that the presence of the genotype multiply by a factor $\varphi_s$ as the probability of presenting a phenotype. We denoted the probability of having the phenotype for the individuals with genotype $s$ in the general population as $c_s$. Let $p$ be the expected value of $c_s$, which is usually called the prevalence of the phenotype in the general population. Therefore,
    $$p\varphi_s=c_s= P(\MC{F}=1|s) =\frac{P(\MC{F}=1,s)}{P(s)} = \frac{P(\MC{F}=1)P(s|\MC{F}=1)}{P(s)} = \frac{p P(s|\MC{F}=1)}{P(s)}$$
Since controls are a representation of the population and cases are a sample of individuals with the phenotype, then we can compute the following estimations:
    $$b_s \approx\frac{M_{1s}}{M_s}, \quad P(s)\approx \frac{M_{0s}}{N_0} \quad \text{ and } \quad P(s|\MC{F}=1) \approx \frac{M_{1s}}{N_1}$$
Therefore
    $$\varphi_s\approx \frac{N_0}{N_1} \frac{M_{1,s}}{M_{0,s}} = \frac{N_0}{N_1} \frac{M_{1,s}/M_s}{1-M_{1,s}/M_s} =
		\frac{N_0}{N_1} \frac{\widehat{b}_s}{1-\widehat{b}_s}$$
and so
	    $$\widehat{b}_s=\frac{c_sN_1}{p N_0+c_sN_1}=\frac{1}{\frac{p N_0}{c_sN_1}+1}=\frac{\varphi_s N_1}{N_0+\varphi_s N_1}.$$

Next, we proceeded as in the tests statistics defined in (\ref{testbetacool}) for the new $b_s$ that takes into account the effect of the genotype $s$. Therefore, we may construct similar statistics in the same way as above, where the decision rule at a $100(1-\epsilon/2)\%$ are
\begin{equation}\label{testbeta'}
\begin{array}{lcl}
\mbox{ Accept }H_0'&\mbox{ if }& \omega'_{\epsilon/2,\varphi}\leq \widehat{b}_s\leq \omega'_{1-\epsilon/2,\varphi}\\
&&\\
\mbox{ Reject }H_0'& &\mbox{ otherwise }
\end{array}
\end{equation}
 where $Pr(\Beta(\alpha_{M_s}',\beta_{M_s}')<\omega'_{\epsilon/2,\varphi})=Pr(\Beta(\alpha_{M_s}',\beta_{M_s}')>\omega'_{1-\epsilon/2,\varphi})=\epsilon/2$, and
 $$\alpha_{M_s}' = \frac{\varphi N_1}{N_0+\varphi N_1}\left( \frac{M_s(N_0+\varphi N_1-1)}{N_0+\varphi N_1-M_s}-1\right) \quad \text{and} \quad \beta_{M_s}' = \frac{N_0}{N_0+\varphi N_1}\left( \frac{M_s(N_0+\varphi N_1-1)}{N_0+\varphi N_1-M_s}-1\right)$$.

Here, we defined a new parameter, called \emph{the critical effect} of the genotype $s$, with a certain confidence level $\epsilon$ as the effect value, $\varphi_s$, of the expected distribution under the decision rule (\ref{new-testbeta}) where $b_s=q'_{1-\epsilon/2}$ if $b_s>\hat{b}=\frac{N_1}{N}$ or $b_s=q'_{\epsilon/2}$ if $b_s\leq\hat{b}=\frac{N_1}{N}$.

That is, the critical effect size $\Psi_s^\epsilon $ of genotype $s$ at a $\epsilon$ confidence level is the greatest real number $\varphi_s$ (respectively lowest), such that the test given by (\ref{testbetacool}) for $b_s =\frac{\varphi_s N_1}{N_0+\varphi N_1}$ rejects the null hypothesis.

\section{Relationships among the models}\label{Moritz}
Figure~\ref{plots} shows how the models of CT and BT could be connected. Each considered variable ($AA$, $BB$ or $AB$) was treated as a different model. This implies that there is no one-to-one correspondence between CT and BT models. For instance, the allelic CT model is not equal to the probability of being case with the allele $A$ or $B$, as analyzed in BT model. On the contrary, both cases ($A$ and $B$) must be taken into account when compared to the allelic CT model.

Thus, there are eight models in the BT which seems to be of interest: allele A ($A$), allele B ($B$), dominant of $A$ ($AA\cup AB$), recessive of $A$ ($AA$), dominant of $B$ ($AB \cup BB$), recessive of $B$ ($BB$), homozygous ($AA\cup BB$) and heterozygous ($AB$). Thus, summarizing that there are five interesting models for CT against eight meaningful models for the new BT.

\begin{figure}[h!]
    \begin{center}
    \begin{tabular}{c}
	\includegraphics[height=7cm]{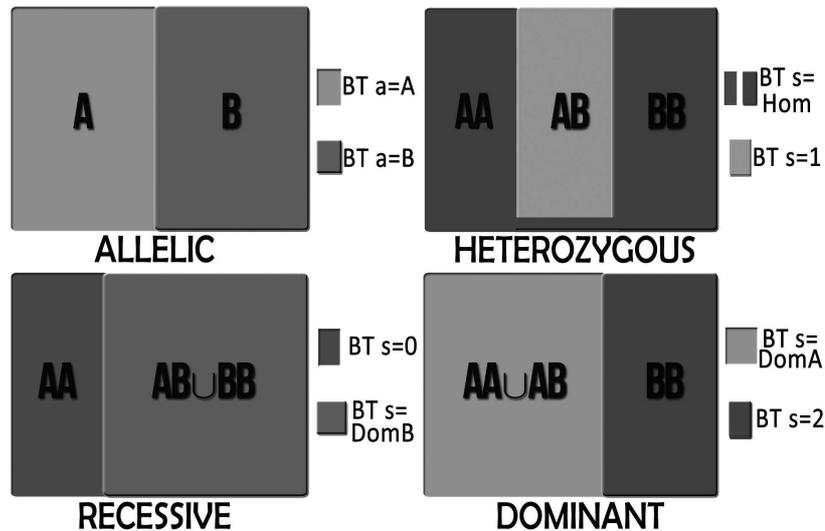}
    \end{tabular}
    \end{center}
\caption{\label{plots} Paired models, \emph{i.e.} relationships between the models of Conventional Test (CT), composed by allelic, heterozygous, recessive, and dominant; the new test proposed, called the Beta Test (BT), composed by allele $A$ (BT $a=A$), allele $B$ (BT $a=B$), $s=AA$ (BT $s=0$), $s=AB$ (BT $s=1$), $s=BB$ (BT $s=2$), dominant of $A$ (BT $s=$DomA), dominant of $B$ (BT $s=$DomB), and homozygous (BT $s=$Hom).}
\end{figure}

\bibliographystyle{plainnat}
\bibliography{garcia-cremades-bib}

\end{document}